\newtheorem{definition}{Definition}[section]
\newtheorem{example}{Example}[section]
\newtheorem{remark}{Remark}[section]
\newcommand{\qed}{\hfill \ensuremath{\square}}
\newcommand{\blackqed}{\hfill \ensuremath{\blacksquare}}
\newcommand{\argmax}{\operatornamewithlimits{argmax}}
\newcommand{\OMIT}[1]{}
\newcommand{\COMM}[2]{{
\begin{CJK}{UTF8}{ipxm}
\ifthenelse{\equal{#1}{MI}}{\color{blue}}{
\ifthenelse{\equal{#1}{TM}}{\color{red}}{
\ifthenelse{\equal{#1}{TR}}{\color{magenta}}{
\ifthenelse{\equal{#1}{BB}}{\color{cyan}}}}}
[#1: #2]
\end{CJK}
}}
\begin{document}

\title{Algorithmic Meta-Theorems for Monotone Submodular Maximization}
\author{
Masakazu Ishihata \\
NTT CS Laboratories \\
ishihata.masakazu@lab.ntt.co.jp\\
\and
Takanori Maehara 
\\ RIKEN Center for Advanced Intelligence Project \\ 
takanori.maehara@riken.jp
\and
Tomas Rigaux \\
\'Ecole Normale Sup\'erieure \\
tomas@rigaux.com
}
\date{}

\maketitle






\begin{abstract}
We consider a monotone submodular maximization problem whose constraint is described by a logic formula on a graph.
Formally, we prove the following three ``algorithmic metatheorems.''

(1) If the constraint is specified by a monadic second-order logic on a graph of bounded treewidth, the problem is solved in $n^{O(1)}$ time with an approximation factor of $O(\log n)$.

(2) If the constraint is specified by a first-order logic on a graph of low degree, the problem is solved in $O(n^{1 + \epsilon})$ time for any $\epsilon > 0$ with an approximation factor of $2$.

(3) If the constraint is specified by a first-order logic on a graph of bounded expansion, the problem is solved in $n^{O(\log k)}$ time with an approximation factor of $O(\log k)$, where $k$ is the number of variables and $O(\cdot)$ suppresses only constants independent of $k$.
\end{abstract}

\clearpage
\tableofcontents
\clearpage

\section{Introduction}

\subsection{Problems and Results}

We consider \emph{monotone submodular maximization problems} whose feasible sets are subgraphs specified by \emph{monadic second-order (MSO) formula} and \emph{first-order (FO) formula} \footnote{ 
The \emph{first-order logic on graphs} is a language that consists of vertex variables $x$, edge predicates $e(x, y)$, and the usual predicate logic symbols ($\forall$, $\exists$, $\lnot$, $\land$, $\lor$, $=$, $\neq$, etc.)
The \emph{monadic second-order logic on graphs} extends the first-order logic by adding vertex subset variables $X$ and vertex inclusion predicate $x \in X$. 
In this paper, a \emph{first-order formula} is a formula expressed by the first-order logic on graphs.
A \emph{monadic second-order formula} is defined similarly.
}.
 Formally, we consider the following two problems.

\begin{definition}[MSO-Constrained Monotone Submodular Maximization Problem]
\label{def:mso}
Let $G = (V(G), E(G))$ be an undirected graph, $\phi(X)$ be a monadic second-order formula with a free vertex-subset variable $X$%
, and $f \colon 2^{V(G)} \to \mathbb{R}$ be a nonnegative monotone submodular function.%
\footnote{
For a finite set $V$, a function $f \colon 2^V \to \mathbb{R}$ is \emph{nonnegative} if $f(U) \ge 0$. 
$f$ is \emph{monotone} if $f(U) \le f(W)$ for all $U, W \subseteq V$ with $U \subseteq W$.
$f$ is \emph{submodular} if $f(U) + f(W) \ge f(U \cup W) + f(U \cap W)$ for all $U, W \subseteq V$.
}
Then, the MSO-constrained monotone submodular maximization problem is defined as follows.
\begin{align}
\begin{array}{ll}
\textrm{\rm maximize} & f(U) \\
\textrm{\rm subject to} & G \models \phi(U), \ U \subseteq V(G).\footnotemark
\end{array}
\end{align}
\footnotetext{$G \models \phi(U)$ means formula $\phi(X)$ is satisfied on $G$ when $U$ is substituted to the free variable $X$. $G \models \phi(u_1, \ldots, u_k)$ is defined similarly.}
\end{definition}

\begin{definition}[FO-Constrained Monotone Submodular Maximization Problem]
\label{def:fo}
Let $G = (V(G), E(G))$ be an undirected graph, $\phi(x_1, \ldots, x_k)$ be a first-order formula with free vertex variables $x_1, \ldots, x_k$%
, and $f \colon 2^{V(G)} \to \mathbb{R}$ be a nonnegative monotone submodular function.
Then the \emph{FO-constrained monotone submodular maximization problem} is defined as follows.
\begin{align}
\begin{array}{ll}
\text{\rm maximize} & f(\{u_1, \ldots, u_k\}) \\
\text{\rm subject to} & G \models \phi(u_1, \ldots, u_k), \ u_1, \ldots, u_k \in V(G).
\end{array}
\end{align}
\end{definition}
In both the problems, we regard the length of the formula $|\phi|$ as a constant. 
In particular, in the FO-constrained problem, we regard the number $k$ of free variables (i.e., the cardinality of the solution) as a constant.

Both problems are very difficult, even for finding feasible solutions;
the MSO-constrained problem contains the three-coloring problem.
Therefore, unless P $=$ NP, we cannot obtain a feasible solution in polynomial time~\cite{karp1972reducibility}.
The FO-constrained problem can be solved in $O(n^k)$ time by an exhaustive search, where $n$ is the number of the vertices in the graph; however, it is difficult to improve this result, because the problem contains the $k$-clique problem, which cannot be solved in $n^{o(k)}$ time unless the exponential time hypothesis fails~\cite{chen2006strong}. 
Therefore, in both the problems, we have to restrict the graph classes suitably to obtain non-trivial results.

In this study, we show that the MSO- and FO-constrained monotone submodular maximization problems are well solved if the graphs are in certain classes as follows
(see Sections~\ref{sec:mso}, \ref{sec:fodeg}, and \ref{sec:foexp} for the definition of these graph classes).
Here, we assume that a submodular function is given by a value oracle, and is evaluated in $O(1)$ time.
\begin{theorem}
\label{thm:mso}
Let $\mathcal{G}$ be a class of graphs having bounded treewidth.
Then, for each $G \in \mathcal{G}$, the MSO-constrained monotone submodular maximization problem is solved in $n^{O(1)}$ time with an approximation factor of $O(\log n)$.%
\footnote{An algorithm has an approximation factor of $\alpha$ if $\alpha f(\text{ALG}) \ge f(\text{OPT})$ holds. where ALG is the solution obtained by the algorithm and OPT is the optimal solution.}
\blackqed
\end{theorem}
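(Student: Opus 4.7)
The plan combines an optimization version of Courcelle's theorem with a bucketing reduction from monotone submodular maximization to modular (linear) maximization. First I would compute, in linear time, a tree decomposition of $G$ of width at most the constant treewidth bound of $\mathcal{G}$. Using this decomposition I prepare a polynomial-time oracle $\mathcal{O}_\phi$ that, given any nonnegative vertex weighting $\omega \colon V(G) \to \mathbb{R}_{\ge 0}$, returns a set $U \subseteq V(G)$ maximizing $\sum_{v \in U} \omega(v)$ subject to $G \models \phi(U)$, or reports infeasibility. This is the weighted extension of Courcelle's theorem (Arnborg--Lagergren--Seese): the MSO formula $\phi$ compiles into a bottom-up tree automaton on the decomposition, and a scalar aggregator threaded through the DP yields optimal weighted solutions.

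Next I reduce the submodular objective to $L = O(\log n)$ instances of the modular one. Let $v^\star$ maximise $f(\{v\})$ over vertices that participate in some feasible set. Since $f$ is monotone submodular with $f(\emptyset) \ge 0$, one has $f(\{v^\star\}) \le f(\textup{OPT}) \le \sum_{v \in \textup{OPT}} f(\{v\}) \le n \cdot f(\{v^\star\})$, so the singleton values relevant to OPT lie in a range of multiplicative width at most $n$. I partition $V(G)$ into dyadic bands $V_1, \ldots, V_L$ according to $f(\{v\})$, enlarge the MSO formula to $\phi_i(U) \equiv \phi(U) \wedge \forall v\,(v \in U \Rightarrow v \in V_i)$ by passing $V_i$ as a unary predicate, and for each $i$ invoke $\mathcal{O}_{\phi_i}$ with uniform weighting on $V_i$ to obtain a candidate $U_i$; I return $\argmax_i f(U_i)$.

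The approximation bound follows from a pigeonhole argument over the bands: writing $\textup{OPT} = O_1 \sqcup \cdots \sqcup O_L$ with $O_i = \textup{OPT} \cap V_i$, submodularity yields $f(\textup{OPT}) \le \sum_i |O_i|\cdot \max_{v \in V_i} f(\{v\})$, so some band $i^\star$ carries a $\Omega(1/L)$ fraction of $f(\textup{OPT})$. Within that band, $O_{i^\star}$ is feasible for $\phi_{i^\star}$, hence $|U_{i^\star}| \ge |O_{i^\star}|$; combined with the fact that every vertex of $V_{i^\star}$ has singleton value within a factor of two of the band's maximum, monotone submodularity then yields $f(U_{i^\star}) = \Omega(f(\textup{OPT})/\log n)$.

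The main obstacle is not the DP---that is standard Courcelle machinery---but the reduction step: MSO-definable families need not be downward-closed, matroidal, or even monotone (for example, ``dominating set'' is upward-closed), which blocks the usual continuous-greedy or pipage-rounding toolbox for monotone submodular maximization. The bucketing trick sidesteps this, but requires that the restriction $U \subseteq V_i$ be absorbed into an MSO formula of size depending only on $|\phi|$; this is clean because $V_i$ can be supplied as a vertex-colouring and MSO on bounded-treewidth graphs handles a constant number of colours without blowing up the DP. Tightening the ``one band carries a $1/\log n$ fraction of OPT'' constant and translating the modular estimate back to the submodular $f$ inside a single band are the only genuine calculations.
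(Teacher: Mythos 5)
Your reduction to the modular case breaks at two points, and the second is fatal to any approach of this shape. First, the claim that $O_{i^\star} = \mathrm{OPT} \cap V_{i^\star}$ is feasible for $\phi_{i^\star}$ is unjustified: $\phi_{i^\star}(U)$ still requires $\phi(U)$ to hold, and MSO-definable families are not downward closed (you note this yourself), so the restriction of $\mathrm{OPT}$ to one band need not satisfy $\phi$ at all --- the oracle on band $i^\star$ may legitimately report infeasibility or return a set unrelated to $O_{i^\star}$. Second, and more fundamentally, even granting a feasible $U_{i^\star} \subseteq V_{i^\star}$ with $|U_{i^\star}| \ge |O_{i^\star}|$, submodularity gives only the upper bound $f(U) \le \sum_{v \in U} f(\{v\})$; there is no matching lower bound, so the fact that $U_{i^\star}$ is large and consists of vertices of near-maximal singleton value yields only $f(U_{i^\star}) \ge M_{i^\star}/2$, where $M_{i^\star} = \max_{v \in V_{i^\star}} f(\{v\})$. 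That is a factor $|O_{i^\star}|$ (up to $n$) short of the band's share $|O_{i^\star}| \cdot M_{i^\star}$ of $f(\mathrm{OPT})$. A coverage function in which all vertices of one large feasible set cover a single common item while a smaller feasible set covers many distinct items realizes this loss; all singleton values are equal there, so bucketing cannot separate the two cases. This is exactly the obstruction that makes ``optimize a modular proxy built from singleton values'' lose an $\Omega(n)$ factor in general, and why Goemans et al.\ need far heavier machinery to reach even $O(\sqrt{n}\log n)$ when the constraint is accessed through a linear-optimization oracle.

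The paper avoids the modular proxy entirely. It compiles the feasible family into a structured DNNF via Amarilli et al., finds a $(2/3)$-leaf separator of the vtree, and runs Chekuri--P\'al's recursive greedy: for each of the $O(1)$ OR-gates at the separator it solves the left subproblem, then re-optimizes the right subproblem against the \emph{marginal} function $f_{U_1}(\cdot) = f(U_1 \cup \cdot) - f(U_1)$. This adaptive re-querying of $f$ on partial solutions is what captures the interaction between the two halves that singleton values cannot see, and the $O(\log n)$ factor comes from the recursion depth of the balanced separator with a $+1$ loss per level (Lemma~\ref{lem:prefixdeviation}). If you wanted to salvage your outline, the weighted Courcelle/Arnborg--Lagergren--Seese DP would have to be replaced by something that threads conditional marginals of $f$ through the decomposition, which is precisely the recursive greedy.
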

\begin{theorem}
\label{thm:fodeg}
Let $\mathcal{G}$ be a class of graphs having low degree.
Then, for each $G \in \mathcal{G}$, the FO-constrained monotone submodular maximization problem is solved in $O(n^{1 + \epsilon})$ time for any $\epsilon > 0$ with an approximation factor of $2$.
\blackqed
\end{theorem}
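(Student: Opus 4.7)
The plan is to reduce the FO-constrained problem on a bounded-degree graph to monotone submodular maximization subject to a partition matroid, and apply the classical greedy algorithm of Nemhauser--Wolsey--Fisher to obtain the approximation factor $2$. The bounded degree hypothesis enters in two places: FO sentences can be evaluated in near-linear time on such graphs, and every ball of bounded radius contains only $O(1)$ vertices, so local information can be enumerated cheaply.

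The first step is to normalize $\phi(x_1,\ldots,x_k)$ via Gaifman's locality theorem, writing it as a Boolean combination of $(r,s)$-local sentences (with no free variables) together with $r$-local formulas around the free variables, where $r$ depends only on $|\phi|$. The sentences can be decided in $O(n^{1+\epsilon})$ time by FO model-checking on bounded-degree graphs, so after substituting their truth values, the constraint becomes a Boolean combination of $r$-local formulas $\chi_j(\bar x)$ whose satisfaction depends only on the induced subgraph on the $r$-balls around $x_1,\ldots,x_k$.

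Next, I would split the search space into $O(1)$ cases indexed by the \emph{clustering pattern} of $\bar x$: for every partition $P=\{B_1,\ldots,B_m\}$ of $\{1,\ldots,k\}$, consider tuples in which variables of the same block are within pairwise distance $2r$ while variables of different blocks are at pairwise distance greater than $2r$. Under such a separation each $\chi_j$ factors across blocks, so feasibility becomes a conjunction of constant-size, purely local tests on each block. Because every ball of radius $O(rk)$ contains $O(1)$ vertices, all \emph{cluster gadgets} (valid tuples for a single block, centered at some vertex) can be enumerated in $O(n)$ total time.

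For each fixed $P$, the resulting problem is monotone submodular maximization, with $f$ lifted to gadgets, subject to the partition matroid that picks exactly one gadget per block, augmented by a conflict relation enforcing inter-block separation. Standard greedy on the partition matroid alone yields a $1/2$-approximation. The main obstacle is to absorb the separation conflicts into the greedy without degrading this ratio; I would handle it by noting that in a bounded-degree graph each gadget conflicts with only $O(1)$ others, so a conflict-aware greedy that skips already-blocked gadgets still attains approximation factor $2$ via a local exchange argument inside each matroid block. Taking the best solution over the $O(1)$ partitions $P$, and using precomputed local neighborhood tables to answer each feasibility and marginal-gain query in $O(n^\epsilon)$ amortized time, gives the claimed total running time of $O(n^{1+\epsilon})$.
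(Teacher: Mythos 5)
Your first two steps (Gaifman normalization and the case split over clustering patterns of the free variables) match the paper's approach, which uses the Gaifman normal form for formulas to write $\phi$ as a disjunction over partitions $\Pi$ of conjunctions $\bigwedge_I \phi_I^\Pi(\bar x_I) \land D^\Pi$, with $D^\Pi$ enforcing pairwise distance $> 2r$ between blocks. But there are two problems with what follows. The minor one: the theorem is about \emph{low degree} graphs (maximum degree up to $n^{\epsilon}$), not bounded degree, so a ball of radius $O(rk)$ contains $n^{O(\epsilon rk)}$ vertices rather than $O(1)$; your gadget counts and per-query costs must be restated as $n^{O(\epsilon)}$, which is patchable by rescaling $\epsilon$ but means each gadget can conflict with $n^{O(\epsilon)}$ others, not $O(1)$.

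The serious gap is the claim that a ``conflict-aware greedy that skips already-blocked gadgets'' retains the factor $2$ via a local exchange inside each matroid block. The inter-block separation constraint is exactly what makes this fail: the constraint is not a partition matroid, and a single greedy choice in one block can simultaneously block the optimal gadgets of many other blocks. Concretely, a gadget for block $1$ consists of up to $k$ vertices, each of which can lie within distance $2r$ of the optimal gadget of a different block, while those optimal gadgets are pairwise far apart; if the remaining gadgets in the blocked blocks have negligible value, greedy collects value $\approx 1$ while the optimum is $\approx k$. No exchange argument confined to a single block can recover this, because the damage is spread across blocks. This is precisely the difficulty the paper's proof is built around: it observes that infeasibility of the optimal block $\bar u_I^*$ at step $I$ can only occur if some entry of $\bar u^*$ lies in $N(\bar u, 2r)$, where $\bar u$ is the current greedy solution, and it therefore \emph{guesses} such an entry (one of $k^2 n^{2\epsilon r}$ candidates) and recurses with that vertex forced to be avoided, building a recursion tree of depth $k$ and size $n^{O(\epsilon r k)}$ (the ``suspect-and-recurse'' technique). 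At least one leaf of that tree yields a solution to which $\bar u^*$ is prefix feasible, and for that leaf a prefix-dominating argument (Lemma~\ref{lem:prefixdeviation} and Lemma~\ref{lem:prefixgreedy}) delivers the factor $2$. Your proposal has no mechanism playing this role, so as written it does not establish the approximation guarantee.
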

\begin{theorem}
\label{thm:foexp}
Let $\mathcal{G}$ be a class of graphs having bounded expansion.
Then, for each $G \in \mathcal{G}$, the FO-constrained monotone submodular maximization problem is solved in $n^{O(\log k)}$ time with an approximation factor of $O(\log k)$.
Here, $O(\cdot)$ suppresses only the constants independent of $k$.
\blackqed
\end{theorem}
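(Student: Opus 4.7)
The plan is to combine the low tree-depth coloring characterization of bounded-expansion classes with a recursive algorithm that applies Theorem~\ref{thm:mso} on bounded-tree-depth subgraphs and that controls the submodular loss at each level of recursion.

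The first step is a reduction to bounded tree-depth. By the theorem of Ne{\v s}et{\v r}il and Ossona de Mendez, every class of graphs of bounded expansion admits, for each parameter $p$, a constant $N(p)$ (depending only on the class and $p$) and a linear-time computable vertex coloring $\chi\colon V(G) \to [N(p)]$ such that any union of at most $p$ color classes induces a subgraph of tree-depth at most $p$. Applying this with $p=k$ and observing that any feasible solution $\{u_1, \ldots, u_k\}$ uses at most $k$ color classes, I enumerate all size-$k$ subsets of colors -- at most $\binom{N(k)}{k}$ of them, a constant independent of $n$ -- and reduce the original problem to that many FO-constrained submodular maximization instances on induced subgraphs of tree-depth, and hence tree-width, at most $k$.

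The second step is a recursive algorithm on each bounded-tree-depth instance, exploiting its rooted-forest tree-depth decomposition. At each step the algorithm enumerates a constant-size piece of the decomposition (such as a centroid path of the current forest) over all $n^{O(1)}$ vertex assignments; each assignment fixes part of the solution and splits $\phi$ into subformulas on the two sides of the enumerated piece, which are evaluated via the MSO dynamic program of Theorem~\ref{thm:mso}. A balanced centroid split makes the recursion depth $O(\log k)$, so the overall running time is $n^{O(\log k)}$. Submodularity of $f$, via the subadditivity inequality $f(A) + f(B) \ge f(A \cup B)$ combined with a geometric bucketing of target marginal values, bounds the approximation loss at each level by a constant, giving the overall $O(\log k)$ factor.

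The main obstacle is the design and analysis of this recursion. The enumerated piece must be chosen so that (i) the residual formulas on the two sides remain first-order and can be handled recursively, (ii) the tree-depth of each side strictly decreases by a constant multiplicative factor, and (iii) the per-level submodular loss composes multiplicatively to $O(\log k)$ rather than to $\mathrm{poly}(k)$. Handling the interaction between MSO type-refinement along the enumerated piece and the submodular marginal bucketing is the main technical subtlety; I expect to need a careful type-refinement argument akin to quantifier-elimination schemes for bounded-expansion classes.
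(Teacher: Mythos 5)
Your first step (low tree-depth colorings, enumerate $k$ color classes, reduce to bounded tree-depth instances) is exactly the reduction the paper considers and explicitly rejects in its ``Difficulty of FO-Constrained Problem on Bounded Expansion Graphs'' discussion. The obstruction is concrete: the only submodular-maximization subroutine available on bounded-treewidth graphs is Theorem~\ref{thm:mso}, whose approximation factor is $O(\log n)$, not $O(\log k)$. Any recursion whose base cases are ``evaluated via the MSO dynamic program of Theorem~\ref{thm:mso}'' therefore inherits an $\Omega(\log n)$ loss, which already fails the claimed $O(\log k)$ bound (and is no better than the trivial exact $O(n^k)$ exhaustive search). The exact linear-maximization DP of Arnborg et al.\ does not extend to submodular objectives, so there is no exact or $O(1)$-approximate subroutine to plug in at the leaves.

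The second step has further gaps. Your recursion depth claim is unsupported: a tree-depth-$k$ decomposition is a rooted forest on $n$ vertices, so a centroid split of that forest gives depth $O(\log n)$; the paper achieves depth $O(\log k)$ only because it recurses on a $k$-vertex object, namely the equality graph over the \emph{variables} produced by the Kazana--Segoufin quantifier elimination (Lemma~\ref{lem:normalform}), whose components are trees on at most $k$ nodes. Your loss accounting is also off: in the recursive-greedy analysis the per-level loss is \emph{additive} ($\alpha(k) \le \alpha(k/2) + 1$, via Lemma~\ref{lem:prefixdeviation}); a ``constant per-level loss composing multiplicatively'' over $O(\log k)$ levels would give $k^{O(1)}$, not $O(\log k)$, and the ``geometric bucketing of target marginal values'' is not an argument. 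Finally, you would still need a mechanism for the cross-component interactions (the paper's inequality constraints $\Delta^{\neq}$, handled by suspecting entries of a constant-size forbidden pattern and recursing), which your plan does not address.
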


\subsection{Background and Motivation}

\subsubsection{Submodular maximization.}

The problem of maximizing a monotone submodular function under some constraint is a fundamental combinatorial optimization problem, and has many applications in machine learning and data mining~\cite{krause2014submodular}.
This problem cannot be solved exactly in polynomially many function evaluations even for the cardinality constraint~\cite{feige1998threshold}; therefore, we consider approximation algorithms.

Under several constraints, the problem can be solved in polynomial time within a reasonable approximation factor. 
Examples include the cardinality constraint~\cite{nemhauser1978analysis}, knapsack constraint~\cite{sviridenko2004note}, and matroid constraint~\cite{calinescu2011maximizing}.
The problem is also solved on some graph-related constraints such as connectivity constraint~\cite{kuo2015maximizing} and $s$-$t$ path constraint~\cite{chekuri2005recursive}. 

Here, our research question is as follows:
\begin{quote}
\emph{What constraints admit efficient approximation algorithms for monotone submodular maximization problems}? 
\end{quote}

One solution to this question is given by Goemans et al.~\cite{goemans2009approximating}: 
If we can maximize linear functions on the constraint in polynomial time, the corresponding monotone submodular maximization problem can be solved in polynomial time with an approximation factor of $O(\sqrt{n} \log n)$.
This factor is nearly tight since we cannot obtain a $o(\sqrt{n} \log \log n / \log n)$ approximate solution in polynomially many oracle calls~\cite{bruggmann2017submodular}.
If a linear programming relaxation of the constraint has low correlation gap, we obtain an algorithm with an approximation factor that depends on the correlation gap by using the continuous greedy algorithm with the contention resolution scheme~\cite{vondrak2011submodular}.

In this study, we consider another approach. 
As in Definitions~\ref{def:mso} and \ref{def:fo}, we assume that the feasible sets are subgraphs of a graph specified by a logic formula.
To the best of our knowledge, no existing studies have considered this situation, and we believe that this situation is important in both practice and theory:
In practice, such problems appear in sensor network design problems~\cite{chekuri2005recursive,kuo2015maximizing}; thus, understanding classes of tractable problems helps practitioners to model problems. 
In theory, this may provide new algorithmic techniques because we need to combine quite different techniques in submodular maximization and mathematical logic. 

\subsubsection{Algorithmic metatheorem.}

In the field of algorithmic meatheorem, the constraints are represented by logic formulas~\cite{grohe2011methods}.
An \emph{algorithmic metatheorem} claims that if a problem is described in a certain logic and the inputs are structured in a certain way, then the problem can be solved with a certain amount of resources~\cite{tantau2016gentle}.
There are many existing algorithmic metatheorems, and Table~\ref{tbl:existing} shows some existing results.

The \emph{model checking problem on a graph $G$} asks whether the graph $G$ satisfies a certain property $\phi$, $G \models \phi$, or not. 
For the MSO formulas, Courcelle~\cite{courcelle1990monadic} showed that the model checking problem can be solved in linear time for bounded treewidth graphs. 
This result is tight for minor-closed graph classes~\cite{seese1991structure}. 
For the first-order formulas, Seese~\cite{seese1996linear} showed that the model checking problem can be solved in linear time for bounded degree graphs.
Later, this result was extended to low-degree graphs~\cite{grohe2001generalized}, bounded expansion graphs~\cite{grohe2017deciding} and nowhere dense graphs~\cite{grohe2018first}.

The \emph{counting problem on a graph $G$} asks the number of subgraphs which  satisfy a certain property $\phi$, that equals to the cardinality of the set $\mathcal{F} = \{ U \subseteq V(G) : G \models \phi(U) \}$ for the monadic second-order logic case and the set $\mathcal{F} = \{ (u_1, \ldots, u_k) \in V(G)^k : G \models \phi(u_1, \ldots, u_k) \}$ for the first-order case. 
The \emph{enumeration problem on a graph $G$} outputs the elements in $\mathcal{F}$ one-by-one.
As for the model checking problem, there are results on the monadic second-order logic with bounded treewidth graphs~\cite{bagan2006mso,amarilli2017circuit}, and the first-order logic with bounded degree graphs~\cite{kazana2011first}, low degree graphs~\cite{durand2014enumerating}, bounded expansion graphs~\cite{kazana2013enumeration}, and nowhere dense graphs~\cite{schweikardt2018enumeration}.

The \emph{linear maximization problem on a graph $G$} involves the maximization of a linear function on $\mathcal{F}$ defined above.
Compared with the model checking, counting, and enumeration problems, this problem is less studied.
There is a classical result on the monadic second-order logic with bounded treewidth graphs~\cite{arnborg1991easy}.
To the best of our knowledge, the result on the first-order logic with low degree graphs has not been explicitly stated yet. 
The result on the first-order logic with bounded expansion graphs has also not been explicitly stated; however, it is obtained by the same technique as that of Gajarsky et al.~\cite{gajarsky2017parameterized}, which shows the existence of a linear-sized extended formulation.
The possibility of extending these results for nowhere dense graphs is still open.

Our problems (Definitions~\ref{def:mso}, \ref{def:fo}) generalize linear functions to monotone submodular functions.
However, due to the submodularity, we need several new techniques to prove our theorems; see below.

\begin{table}[tb]
\caption{Existing results and our results on algorithmic metatheorems; the results without citations are shown in this paper. Each entry for model checking, counting, and linear maximization shows the time complexity, and each entry for submodular maximization column shows a pair of time complexity and approximation factor.}
\label{tbl:existing}
\centering
\begin{tabular}{cccccc}
Logic & Graph & Model Checking & Counting & Linear Max.  & Submod Max. \\ \hline 
MSO & Bounded Treewidth & $O(n)$~\cite{courcelle1990monadic} & $O(n)$~\cite{arnborg1991easy} & $O(n)$~\cite{arnborg1991easy} & ($n^{O(1)}$, $O(\log n)$) \\
FO & Low Degree & $O(n^{1+\epsilon})$~\cite{grohe2001generalized} & $O(n^{1+\epsilon})$~\cite{durand2014enumerating} & $O(n^{1+\epsilon})$ & ($O(n^{1+\epsilon})$, $2$) \\
FO & Bounded Expansion & $O(n)$~\cite{dvovrak2013testing} & $O(n)$~\cite{kazana2013enumeration} & $O(n)$~\cite{gajarsky2017parameterized} & ($n^{O(\log k)}$, $O(\log k)$) \\
FO & Nowhere Dense & $O(n^{1+\epsilon})$~\cite{grohe2017deciding} & $O(n^{1+\epsilon})$~\cite{grohe2018first} & open & open \\
\end{tabular}
\end{table}

\subsection{Difficulty of Our Problems}

\paragraph{Difficulty of MSO-Constrained Problem on Bounded Treewidth Graphs.}

A linear function can be efficiently maximized on this setting~\cite{arnborg1991easy}. 
Therefore, it seems natural to extend their technique to the submodular setting.
However, we see that such an extension is difficult.

Their method first encodes a given graph as a binary tree by tree-decomposition.
Then, it converts a given monadic second-order formula into a tree automaton~\cite{thatcher1968generalized}.
Finally, it solves the problem using the bottom-up dynamic programming algorithm.
This gives the optimal solution in $O(n)$ time.

This technique cannot be extended to the submodular setting because a monotone submodular function cannot be maximized by the dynamic programming algorithm. 

\paragraph{Difficulty of FO-Constrained Problem on Low Degree Graphs.}

There are no existing studies on the linear maximization problem for low degree graphs.
Therefore, we need to establish a new technique.
In particular, in the result on the counting problems~\cite{durand2014enumerating}, they performed an inclusion-exclusion type algorithm. 
However, it is difficult to extend such a technique to the optimization problems.

\paragraph{Difficulty of FO-Constrained Problem on Bounded Expansion Graphs.}

To describe the difficulty of this case, we first introduce the algorithm for the linear maximization problem on bounded expansion graphs.%
\footnote{This is a simplified version of Gajarsky et al~\cite{gajarsky2017parameterized}'s proof for their result on extended formulations.}
If a graph class $\mathcal{G}$ has bounded expansion, then there exists functions $g \colon \mathbb{N} \to \mathbb{N}$ and $w\colon \mathbb{N} \to \mathbb{N}$ such that, for all $G \in \mathcal{G}$ and $k \in \mathbb{N}$, there is a coloring $c \colon V(G) \to \{1, \ldots, g(k)\}$ such that any $k$ colors induce a subgraph of treewidth bounded by $w(k)$.
Such coloring is referred to as low-treewidth coloring~\cite{nevsetvril2008grad1,nevsetvril2008grad2}.

The algorithm is described as follows.
First, we remove the universal quantifiers from the formula $\phi$ using Lemma~8.21 in \cite{grohe2011methods}.
Let $k' = k + l$, where $l$ is the number of existentially quantified variables.
Then, we find a low-treewidth coloring of $G$ with $g(k')$ colors.
Here, we can see that $k'$ colors are enough to cover all the variables in the formula.
Therefore, by solving the problems on all the $k'$ colored subgraphs using the algorithm for bounded treewidth graphs~\cite{arnborg1991easy}, we obtain the solution.

This technique cannot be extended to the submodular setting, because our result for the bounded treewidth graphs only gives an $n^{O(1)}$ time $O(\log n)$ approximation algorithm.
Since we can obtain the optimal solution in $O(n^k) = n^{O(1)}$ time through an exhaustive search, it does not make sense to reduce the problem to the bounded treewidth graphs.

In fact, most of the existing results for the first-order logic use the results for bounded treewidth graphs as a subroutine~\cite{dvovrak2013testing,grohe2017deciding,grohe2018first,kazana2013enumeration,schweikardt2018enumeration}.
However, the above discussion implies that we cannot use such a reduction for the submodular setting.

\subsection{Proof Outlines}

\subsubsection{Proof Outline of Theorem~\ref{thm:mso}.}

We represent the feasible set $\{ U \subseteq V(G) : G \models \phi(U) \} $ in the \emph{structured decomposable negation normal form (structured DNNF)}~\cite{darwiche2001decomposable} using Amarilli et al.~\cite{amarilli2017circuit}'s algorithm.
Here, a structured DNNF is a Boolean circuit based on the negation normal form, where the partition of variables is specified by a tree, called a \emph{vtree}.

Then, we apply the \emph{recursive greedy algorithm}~\cite{chekuri2005recursive} to the structured DNNF.
We split the vtree at the centroid.
Then, we obtain constantly many subproblems whose numbers of variables are constant factors smaller than the original problem.
By solving these subproblems greedily and recursively, we obtain an $O(\log n)$-approximate solution in $n^{O(1)}$ time, since the recursion depth is $O(\log n)$ and the branching factor is $O(1)$.

\subsubsection{Proof Outline of Theorem~\ref{thm:fodeg}.}

By using Gaifman's locality theorem~\cite{gaifman1982local}, we decompose a given formula into multiple $r$-local formulas.
We perform the greedy algorithm with exhaustive search over the local formulas as follows:
First, we perform the exhaustive search to obtain the optimal solution for the first local formula in $O(n^{1 + \epsilon})$ time.
Then, by fixing the obtained solution, we proceed to the next local formula similarly.
By continuing this process until all the local formulas are processed, we obtain a solution.

In the above procedure, if each $r$-local part of the optimal solution are feasible to the corresponding subproblem, then the obtained solution is a $2$-approximate solution.
Otherwise, we can guess an entry of the optimal solution.
Thus, for each possibility, we call the procedure recursively. 
Then we obtain a recursion tree of size $O(n^{\epsilon})$.
We call this technique \emph{suspect-and-recurse}.
We show that there is at least one solution that has an approximation factor of $2$.

\subsubsection{Proof Outline of Theorem~\ref{thm:foexp}.}

We also use the suspect-and-recurse technique for this theorem; however, the tools used in each step are different.

By using the quantifier elimination procedure of Kazana and Segoufin~\cite{kazana2013enumeration}, we decompose a given formula into multiple ``tree'' formulas.
We perform the greedy algorithm with the recursive greedy algorithm over the tree formulas as follows.
First, we perform the \emph{recursive greedy algorithm} to obtain an $O(\log k)$-approximate solution to the first tree formula in $n^{O(\log k)}$ time.
Then, by fixing the obtained solution, we proceed to the next tree formula similarly.
By continuing this process until all the formulas are processed, we obtain a solution.

In the above procedure, if each tree part of the optimal solution is feasible to the corresponding subproblem, the obtained solution is an $O(\log k)$-approximate solution.
Otherwise, we can guess an entry of a \emph{forbidden pattern} that specifies which assignment makes the optimal solution infeasible.
For each possibility, we call the procedure recursively. 
Then, we obtain a recursion tree of size $O(1)$.
We show that there is at least one solution in the tree that has an approximation factor of $O(\log k)$.

\section{Monadic Second-Order Logics on Bounded Treewidth Graphs}
\label{sec:mso}

\subsection{Preliminaries}

\subsubsection{Bounded Treewidth Graphs}

Let $G = (V(G), E(G))$ be a graph.
A \emph{tree decomposition} of $G$ is a tree $T = (V(T), E(T))$ with map $B \colon V(T) \to 2^{V(G)}$ satisfying the following three conditions~\cite{robertson1986graph}.
\begin{itemize}
\item $\bigcup_{t \in V(T)} B(t) = V(G)$.
\item For all $(u, v) \in E(G)$, there exists $t \in V(T)$ such that $u, v \in B(t)$.
\item For all $s, t \in V(T)$, $B(s) \cap B(t) \subseteq B(r)$ holds for all $r \in V(T)$ on $s$ and $t$.
\end{itemize}
The \emph{treewidth} of $G$ is given by $\min_T \max_{t \in V(T)} |B(t)| - 1$.
A graph class $\mathcal{G}$ has \emph{bounded treewidth} if there exists $w \in \mathbb{N}$ such that the treewidth of all $G \in \mathcal{G}$ is at most $w$.

\subsubsection{Structured Decomposable Negation Normal Form.}
\label{sec:dnnf}

The \emph{decomposable negation normal form (DNNF)} is a representation of a Boolean function~\cite{darwiche2001decomposable}.
Let $V$ be a finite set, and $\bar{b} = \{b_u \mid u \in V\}$ be a set of Boolean variables indexed by $V$.
Then, the DNNF is recursively defined as follows.
\begin{itemize}
\item  
The constants $\top$ (always true) and $\bot$ (always false) are in DNNF.
\item 
The literals $b_u$ and $\lnot b_u$ ($u \in V$) are in DNNF.
\item 
For any partition $\bar{b}_1, \bar{b}_2$ of variables $\bar{b}$ and formulas $h_1^{(i)}(\bar{b}_1)$ and $h_2^{(i)}(\bar{b}_2)$ ($i = 1, \dots, W$) in DNNF, the following formula is in DNNF.
\begin{align}
\label{eq:decomposable}
h(\bar{b}) = \bigvee_{i=1}^W h_1^{(i)}(\bar{b}_1) \land h_2^{(i)}(\bar{b}_2).
\end{align}
We call $h^{(i)}_1(\bar{b}_1)$ and $h^{(i)}_2(\bar{b}_2)$ \emph{factors} of this decomposition.
\end{itemize}
By recursively applying the above decomposition to each factor, every Boolean function can be represented as a DNNF~\cite{darwiche2001decomposable}.
The maximum number $W$ of disjunctions in \eqref{eq:decomposable}, which appears in the recursion, is called the \emph{width} of the DNNF.
A DNNF is usually represented by a Boolean circuit, which is a directed acyclic graph $\mathcal{D} = (V(\mathcal{D}), E(\mathcal{D}))$ whose internal gates are labeled ``AND'' or ``OR'', and the terminals are labeled $\top$, $\bot$, $b_i$, or $\lnot b_i$; see Example~\ref{ex:dnnf} below.

A \emph{vtree} $\mathcal{T} = (V(\mathcal{T}), E(\mathcal{T}))$ is a rooted full binary tree\footnote{A binary tree $\mathcal{T}$ is full if every non-leaf vertex has exactly two children.} whose leaves are the Boolean variables $\bar{b}$. 
A DNNF \emph{respects vtree $\mathcal{T}$} if, for any OR-gate of the DNNF, there exists an internal node $t \in V(\mathcal{T})$ of the vtree such that the partition $\bar{b}_1$ and $\bar{b}_2$ of variables of the decomposition represented by the OR-gate coincides with the leaves of the left and right subtrees of $t$.
A \emph{structured DNNF} is a DNNF that respects some vtree~\cite{darwiche2002knowledge}.

\begin{example}
\label{ex:dnnf}
This example is from Darwiche~\cite{darwiche2011sdd}.
Let $\bar{b} = \{b_1, b_2, b_3, b_4\}$, and $h(\bar{b}) = (b_1 \land b_2) \lor (b_2 \land b_3) \lor (b_3 \land b_4)$ be a Boolean formula.
We split $\bar{b}$ into $\{b_1, b_2\}$ and $\{b_3, b_4\}$. 
Then, the formula is factorized as
\begin{align}
h(\bar{b}) = \left( (b_1 \land b_2) \land \top \right) \lor \left(b_2 \land b_3\right) \lor \left( \top \land (b_3 \land b_4) \right).
\end{align}
This is in a structured DNNF. 
The circuit and the vtree are shown in Figures~\ref{fig:dnnf}, \ref{fig:vtree}.
The top OR-gate in Figures~\ref{fig:dnnf} represents the above decomposition and corresponds to the root node of the vtree shown in Figure~\ref{fig:vtree}.
\qed

\begin{figure}[tb]
\begin{minipage}{0.48\textwidth}
\centering
\tikzstyle{terminal}=[draw,circle,inner sep=0.2em]
\begin{tikzpicture}[circuit logic US,
                    line width=0.8pt,line cap=round,line join=round]
  
\node[or gate,logic gate inputs=nnn,rotate=90] at (0,0) (or1) {};



\node[and gate,rotate=90] at (-2,-1.2) (and1) {};
\node[and gate,rotate=90] at (0,-1.2) (and2) {};
\node[and gate,rotate=90] at (2,-1.2) (and3) {};
\draw (or1.input 1) -- +(0mm,-2mm) -| (and1.output);
\draw (or1.input 2) -- +(0mm,-2mm) -| (and2.output);
\draw (or1.input 3) -- +(0mm,-2mm) -| (and3.output);

\node[or gate,rotate=90] at (-2.5,-2.4) (or2) {};
\node[terminal,label={[yshift=-2em]$\top$}] at (-1.5,-2.2) (top1) {};
\draw (and1.input 1) -- +(0mm,-2mm) -| (or2.output);
\draw (and1.input 2) -- +(0mm,-2mm) -| (top1);

\node[terminal,label={[yshift=-2em]$b_2$}] at (-0.5,-2.2) (z2) {};
\node[terminal,label={[yshift=-2em]$b_3$}] at (0.5,-2.2) (z3) {};
\draw (and2.input 1) -- +(0mm,-2mm) -| (z2);
\draw (and2.input 2) -- +(0mm,-2mm) -| (z3);

\node[terminal,label={[yshift=-2em]$\top$}] at (1.5,-2.2) (top2) {};
\node[or gate,rotate=90] at (2.5,-2.4) (or3) {};
\draw (and3.input 1) -- +(0mm,-2mm) -| (top2);
\draw (and3.input 2) -- +(0mm,-2mm) -| (or3.output);

\node[and gate,rotate=90] at (-2.6,-3.4) (and4) {};
\draw (or2.input 1) -- +(0mm,-2mm) -| (and4.output);

\node[terminal,label={[yshift=-2em]$b_1$}] at (-3.1,-4.4) (z22) {};
\node[terminal,label={[yshift=-2em]$b_2$}] at (-2.1,-4.4) (z32) {};
\draw (and4.input 1) -- +(0mm,-2mm) -| (z22);
\draw (and4.input 2) -- +(0mm,-2mm) -| (z32);

\node[and gate,rotate=90] at (2.4,-3.4) (and5) {};
\draw (or3.input 1) -- +(0mm,-2mm) -| (and5.output);

\node[terminal,label={[yshift=-2em]$b_3$}] at (1.9,-4.4) (z23) {};
\node[terminal,label={[yshift=-2em]$b_4$}] at (2.9,-4.4) (z33) {};
\draw (and5.input 1) -- +(0mm,-2mm) -| (z23);
\draw (and5.input 2) -- +(0mm,-2mm) -| (z33);
\end{tikzpicture}
\caption{Example of a structured DNNF.}
\label{fig:dnnf}
\end{minipage}
\begin{minipage}{0.48\textwidth}
\tikzstyle{node}=[circle, draw, thin, fill=white, inner sep=0.2em]
\centering
\begin{tikzpicture}[>=latex',node distance=4em]
    \node [node] at (0,0) (r) {};
    \node [node] at (-2,-1) (a) {};
    \node [node] at (2,-1) (b) {};
    \node [node,label={below:$b_1$}] at (-3,-2)   (c) {};
    \node [node,label={below:$b_2$}] at (-1.5,-2) (d) {};
    \node [node,label={below:$b_3$}] at (1,-2)    (e) {};
    \node [node,label={below:$b_4$}] at (3,-2)    (f) {};
    \path[-] (r) edge node {} (a);
    \path[-] (r) edge node {} (b);
    \path[-] (a) edge node {} (c);
    \path[-] (a) edge node {} (d);
    \path[-] (b) edge node {} (e);
    \path[-] (b) edge node {} (f);
\end{tikzpicture}
\caption{The vtree of the structured DNNF in Figure~\ref{fig:dnnf}.}
\label{fig:vtree}
\end{minipage}
\end{figure}
\end{example}

A boolean function $h$ can be used to represent a family of subsets of $V$.
We identify a subset $U \subseteq V$ as the indicator assignment $1_U$, which is defined by $b_u = 1$ ($u \in V)$ and $b_u = 0$ ($v \not \in V)$.
Then, $h$ represents a family of subsets $\{ U \subseteq V : h(1_U) = 1 \}$.
For simplicity, we say \emph{$U$ is in $\mathcal{D}$} if $h(1_U) = 1$ where $h$ is a Boolean function represented by $\mathcal{D}$.
Amarilli et al.~\cite{amarilli2017circuit} showed that a family of subsets in a bounded treewidth graph specified by a monadic second-order formula has a compact structured DNNF representation.
\begin{theorem}[Amarilli et al.~\cite{amarilli2017circuit}]
\label{thm:amarilli}
Let $G = (V(G), E(G))$ be a bounded treewidth graph, $\phi(X)$ be a monadic second-order formula, and $h$ is a Boolean function representing the family of subsets $\{ U \subseteq V(G) : G \models \phi(U) \}$.
Then, $h$ is represented by a structured DNNF with a bounded width.%
\footnote{Amarilli et al.~\cite{amarilli2017circuit} did not claim the structuredness and the boundedness of the DNNF. However, by observing their construction, these two properties are immediately confirmed.}
The structured DNNF is obtained in polynomial time.
\blackqed
\end{theorem}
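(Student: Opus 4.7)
The plan is to build the structured DNNF by combining a tree decomposition of $G$ with a finite tree automaton encoding $\phi(X)$. First, I would compute a tree decomposition $(T, B)$ of $G$ of width at most $w$ (the treewidth bound for $\mathcal{G}$) and refine it into a binary nice form in which every vertex $u \in V(G)$ is ``introduced'' at a unique node $t(u)$ of $T$, namely the topmost bag containing $u$. Second, I would apply the Courcelle-style translation to convert $\phi(X)$ into a deterministic bottom-up tree automaton $M = (Q, \Sigma, \delta, F)$ that runs on $T$, whose alphabet at a node $t$ encodes the local bag structure together with a guess for $X \cap S_t$, where $S_t$ is the set of vertices introduced at $t$. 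The state space $Q$ depends only on $|\phi|$ and $w$, hence is a constant.

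Next, I would construct the DNNF inductively along $T$. For each node $t$ and each state $q \in Q$, introduce a sub-circuit $D_{t,q}$ representing the set of $X$-assignments to the variables $b_u$ with $t(u)$ in the subtree rooted at $t$ that drive $M$ to state $q$ at $t$. If $t$ has children $t_1, t_2$ and introduces vertex set $S_t$, the recurrence is
\begin{equation*}
D_{t,q} \;=\; \bigvee_{(q_1,q_2,\alpha)\in \delta^{-1}(q)} D_{t_1,q_1} \wedge D_{t_2,q_2} \wedge L_\alpha,
\end{equation*}
where $\alpha$ ranges over the $2^{|S_t|}$ possible restrictions of $X$ to $S_t$ and $L_\alpha$ is the conjunction of the corresponding literals on $\{b_u : u \in S_t\}$. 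The overall circuit for $\phi(X)$ is then $\bigvee_{q \in F} D_{r,q}$ at the root $r$ of $T$.

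The key point is that each Boolean variable $b_u$ is attached to exactly one node $t(u)$, so the variable sets of the two children subtrees are disjoint; this is what makes the top disjunction decomposable, and it simultaneously produces a natural vtree $\mathcal{T}$ obtained from the binary shape of $T$ by placing each leaf $b_u$ under $t(u)$. The fan-in at each OR-gate is bounded by $|Q|^2 \cdot 2^{|S_t|} \le |Q|^2 \cdot 2^{w+1}$, which is a constant, so the DNNF has bounded width. The total size is $O(|Q| \cdot |V(T)|) = O(n)$, and the construction runs in polynomial (in fact linear) time once the decomposition and automaton are in hand.

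The main obstacle I would expect is not any deep technical step but careful bookkeeping to ensure that (i) the automaton's transitions are designed so that a single application of $\delta$ at $t$ reads both the edges added at $t$ and the local $X$-guess on $S_t$, matching the ``introduction at $t(u)$'' scheme, and (ii) the literals for vertices in $S_t$ are placed in the vtree underneath $t$ in a way that preserves the structured property as children are recursively combined. A secondary point is that the enumeration over $\delta^{-1}(q)$ must be made explicit so that the disjunction is manifestly in DNNF form rather than a symbolic expression; this is routine once $M$ is fixed, and the resulting circuit respects $\mathcal{T}$ by construction.
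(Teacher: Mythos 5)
Your construction is correct and follows essentially the same route the paper sketches in Remark~\ref{rem:linear}: encode the graph via a tree decomposition, convert $\phi$ into a tree automaton \`a la Courcelle/Thatcher--Wright, and build the circuit bottom-up with one OR-gate per (node, state) pair, so that decomposability comes from the disjointness of variables introduced in distinct subtrees and the width is governed by the number of automaton states. The bookkeeping issues you flag (binarizing the ternary conjunction with $L_\alpha$ and hanging the leaves for $S_t$ consistently in the vtree) are exactly the details one must check, and they go through as you expect.
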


\begin{remark}
\label{rem:linear}
The structured DNNF provides a ``syntax sugar'' of the Courcelle-type automaton technique~\cite{courcelle1990monadic}. 
Actually, the theorem is proved as follows. 
First, a bounded treewidth graph is encoded by a labeled binary tree using a tree-decomposition~\cite{bodlaender1996linear}.
Then, the given formula is interpreted as a formula on labeled trees, and is converted into a (top-down) tree automaton using a result of Thatcher and Wright~\cite{thatcher1968generalized}.
We consider the root vertex.
For each tree-automaton transition, we construct structured DNNFs for the subtrees.
Then, by joining the DNNFs with an AND-gate, and by joining the AND-gates with an OR-gate, we obtain the desired structured DNNF whose width is the number of states of the automaton.

Any proof with a structured DNNF can be converted into a proof using a tree-decomposition and a tree automaton by following the above construction.
However, in our case, the former approach gives a simpler proof than the latter one.
\qed
\end{remark}

\subsection{Proof of Theorem~\ref{thm:mso}}

We propose an algorithm to prove Theorem~\ref{thm:mso}.
First of all, we encode a given bounded treewidth graph and a monadic second-order formula into a structured DNNF $\mathcal{D} = (V(\mathcal{D}), E(\mathcal{D}))$ and the corresponding vtree $\mathcal{T} = (V(\mathcal{T}), E(\mathcal{T}))$ using Theorem~\ref{thm:amarilli}. 
Then, the problem is reduced to maximizing a monotone submodular function on $\mathcal{D}$.

Our algorithm is based on Chekuri and Pal's \emph{recursive greedy algorithm}~\cite{chekuri2005recursive}, which is originally proposed for $s$-$t$ path constrained monotone submodular maximization problem.
In this approach, we decompose the problem into several subproblems, and solve the subproblems one-by-one in a greedy manner.

We use leaf separators to obtain the subproblems.
An edge $e \in E(\mathcal{T})$ is a $\gamma$-\emph{leaf separator} if each subtree, which is obtained by removing $e$, has at most $\gamma$ fraction of leaves.
A full binary tree has a $(2/3)$-leaf separator as follows.
\begin{lemma}
\label{lem:separator}
A full binary tree $\mathcal{T} = (V(\mathcal{T}), E(\mathcal{T}))$ with $n \ge 2$ leaves has a $(2/3)$-leaf separator $e \in E(\mathcal{T})$.
Such $e$ is obtained in $O(n)$ time.
\end{lemma}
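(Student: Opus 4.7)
The plan is to root the vtree $\mathcal{T}$ at an arbitrary node and, via a single post-order DFS taking $O(n)$ time, compute for every $v \in V(\mathcal{T})$ the count $s(v)$ of leaves of $\mathcal{T}$ that lie in the subtree hanging below $v$ (with $s(v) = 1$ at leaves and $s(v) = s(c_1) + s(c_2)$ at internal nodes, since the tree is full). Any non-root edge $e = (v, \mathrm{parent}(v))$ splits the leaves of $\mathcal{T}$ into two groups of sizes $s(v)$ and $n - s(v)$, so finding a $(2/3)$-leaf separator reduces to exhibiting a non-root node $v$ with $n/3 \le s(v) \le 2n/3$.

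With the counts in hand, I would then perform a centroid-style descent from the root: while the current node has a child $w$ with $s(w) > 2n/3$, move to $w$. This walk is well-defined because two siblings cannot both have $s > 2n/3$ (their counts sum to at most $n < 4n/3$), and it terminates, after at most $\mathrm{depth}(\mathcal{T})$ steps, at some node $v$ no child of which has $s > 2n/3$.

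The heart of the argument is then a short case split at this terminal node $v$. If $v$ is strictly below the root, we entered $v$ precisely because $s(v) > 2n/3$; in particular $s(v) \ge 2$, so $v$ is internal, and its two children have counts $s_1, s_2$ with $s_1 + s_2 = s(v) > 2n/3$ and $\max(s_1, s_2) \le 2n/3$, which forces $\max(s_1, s_2) \in (n/3,\, 2n/3]$, so the edge from $v$ to the heavier child is a $(2/3)$-leaf separator. If the descent never moved, so $v$ is the root, then the root is still internal (as $n \ge 2$), both its children satisfy $s \le 2n/3$, and they sum to $n$; hence the larger has $s \ge n/2 \ge n/3$, and the edge from the root to that child works. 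The one mildly delicate point is guaranteeing that the terminal node is always internal so that a child edge actually exists, and this is settled precisely by the two inequalities above. Since the DFS and the descent each run in time $O(n)$, the overall complexity is $O(n)$ as claimed.
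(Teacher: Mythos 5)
Your proof is correct, but it takes a different route from the paper's. The paper selects the edge that \emph{minimizes} the imbalance $|n_1 - n_2|$ between the leaf counts of the two sides and then shows by an extremal/exchange argument (comparing with the two edges adjacent to the chosen one) that this minimizer must already be a $(2/3)$-leaf separator. You instead run a centroid-style descent from the root, maintaining the invariant that the current node's subtree is ``heavy'' while no child exceeds $2n/3$ leaves, and you conclude with a clean two-case analysis at the terminal node; your treatment of the degenerate possibilities (terminal node being a leaf, or the descent never leaving the root) is careful and complete, using $n \ge 2$ exactly where needed. Both arguments are elementary and give $O(n)$ algorithms after a single leaf-counting DFS. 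Your version is arguably more self-contained and constructive --- it exhibits the separator directly rather than proving that an extremal choice happens to work --- whereas the paper's version avoids any rooting or descent and is slightly shorter once the minimizing edge is in hand (and, incidentally, your explicit handling of the ``terminal node is internal'' point patches a detail the paper's sketch glosses over). Either proof suffices for the way the lemma is used in Algorithm~\ref{alg:recursivegreedyMSO}.
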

\begin{proof}
This proof is almost the same as that of Lemma~3 in \cite{xiao2013fptass}.
Let $e \in E(\mathcal{T})$ be the edge such that the difference between the numbers of leaves of the subtrees obtained by removing $e$ is the smallest.
Such $e$ is easily obtained in $O(n)$ time using a depth-first search.
We show that $e$ is a $(2/3)$-separator.

Let $\mathcal{T}_1$ and $\mathcal{T}_2$ be the subtrees obtained by removing $e$, and let $n_i$ be the number of leaves of $\mathcal{T}_i$.
Without loss of generality, we assume $n_1 < n_2$.
Let $\mathcal{T}_3$ and $\mathcal{T}_4$ be subtrees of $\mathcal{T}_2$ obtained by removing the other two edges adjacent to $e$.
By the definition of $n_1$, we have $n_1 \le n_3 + n_4$.
By considering the cut separating $T_3$ with the minimality of $e$, we have $n_3 + n_4 - n_1 \le |n_3 - n_4 - n_1|$.
Therefore, we have $n_3 \le n_1$.
Similarly, we have $n_4 \le n_1$.
Therefore, we have $n_1 \le n_3 + n_4 \le 2 n_1$. 
Since $n_1 + n_2 + n_3 = n$, we have $3 n_1 \le n$ and  $n_3 + n_4 \le 2n/3$.
This shows that $e$ is a $(2/3)$-separator.
\qed
\end{proof}

The subproblems are obtained as follows.
Let $(s,t) \in E(\mathcal{T})$ be a $(2/3)$-leaf separator of the vtree $\mathcal{T}$, where $t$ is a child of $s$.
Let $V_1$ be the vertices corresponding to the leaves of the subtree rooted by $t$, and let $V_2 = V \setminus V_1$.
By definition, there are OR-gates $\alpha_1, \ldots, \alpha_W \in V(\mathcal{D})$ that correspond to a factorization~\eqref{eq:decomposable} induced by $t$.
For each $j = 1, \ldots, W$, we define $\mathcal{D}^{(j)}_1$, which is the structured DNNF induced by the descendants (inclusive) of $\alpha_j$.
Then, $\mathcal{D}^{(j)}_1$ is a structured DNNF over $V_1$.
We also define $\mathcal{D}^{(j)}_2$ the structured DNNF obtained by replacing $\alpha_j$ to $\top$ and $\alpha_1, \ldots, \alpha_{j-1}, \alpha_{j+1}, \ldots, \alpha_N$ to $\bot$. 
Then, $\mathcal{D}^{(j)}_2$ is a structured DNNF over $V_2$.
Through this construction, we obtain the following two properties.
\begin{lemma}
\label{lem:widthpreserving}
The widths of $\mathcal{D}^{(j)}_1$, $\mathcal{D}^{(j)}_2$ ($j = 1, \ldots, W$) are at most that of $\mathcal{D}$.
\qed
\end{lemma}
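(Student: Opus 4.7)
The plan is to appeal directly to the definition of width as the maximum fan-in of any OR-gate appearing in the DNNF, and then observe that neither construction can increase this quantity.

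For $\mathcal{D}_1^{(j)}$, I would first note that it is literally the sub-DAG of $\mathcal{D}$ induced by the gates and terminals reachable from $\alpha_j$. Every OR-gate $\beta$ occurring in $\mathcal{D}_1^{(j)}$ is therefore also a node of $\mathcal{D}$, and since all inputs of $\beta$ in $\mathcal{D}$ are themselves descendants of $\alpha_j$, they are inherited unchanged into $\mathcal{D}_1^{(j)}$. Hence the fan-in of $\beta$ is the same in $\mathcal{D}_1^{(j)}$ as in $\mathcal{D}$, and the maximum over the OR-gates of $\mathcal{D}_1^{(j)}$ is at most the maximum over all OR-gates of $\mathcal{D}$, which is the width of $\mathcal{D}$.

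For $\mathcal{D}_2^{(j)}$, the construction modifies $\mathcal{D}$ solely by substituting the constant $\top$ at $\alpha_j$ and the constant $\bot$ at each $\alpha_i$ with $i \neq j$; no gate is added, removed, or rewired. Consequently, the OR-gates of $\mathcal{D}_2^{(j)}$ are in one-to-one correspondence with a subset of the OR-gates of $\mathcal{D}$ (some of whose inputs have become constant terminals), and the fan-in of each corresponding OR-gate is unchanged. Thus every OR-gate of $\mathcal{D}_2^{(j)}$ has fan-in at most the width of $\mathcal{D}$, and the claim follows.

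The only subtle point I anticipate is making sure we are not tacitly simplifying the circuit (for instance, collapsing an OR-gate that acquires a $\top$ input), which could in principle alter the count in either direction. Since the lemma claims only an upper bound and the formal construction of $\mathcal{D}_1^{(j)}$ and $\mathcal{D}_2^{(j)}$ involves no such simplification, the purely syntactic argument above suffices; no deeper structural property of the vtree or of Theorem~\ref{thm:amarilli} is needed here.
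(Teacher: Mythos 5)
Your argument is correct and is exactly the reasoning the paper intends: Lemma~\ref{lem:widthpreserving} is stated without proof as an immediate consequence of the construction, since $\mathcal{D}_1^{(j)}$ is an induced sub-DAG of $\mathcal{D}$ and $\mathcal{D}_2^{(j)}$ only substitutes constants for the gates $\alpha_i$, so no OR-gate gains fan-in. Your added remark that no simplification of the circuit is performed, so the bound is purely syntactic, is a reasonable precaution but does not change the substance.
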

\begin{lemma}
\label{lem:decompose}
Let $U_1 \subseteq V_1$ and $U_2 \subseteq V_2$.
Then $U_1 \cup U_2$ is in $\mathcal{D}$ if and only if there exists $j \in \{1, \ldots, W\}$ such that $U_1$ is in $\mathcal{D}^{(j)}_1$ and $U_2$ is in $\mathcal{D}^{(j)}_2$.
\qed
\end{lemma}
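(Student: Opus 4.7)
The plan is to read off both directions of the biconditional directly from the factorization~\eqref{eq:decomposable} that the gates $\alpha_1, \ldots, \alpha_W$ encode at vtree node $t$. Let $h$ denote the Boolean function represented by $\mathcal{D}$, and write $\bar{b}_1$ and $\bar{b}_2$ for the blocks of variables indexed by $V_1$ and $V_2$, respectively. Since $\mathcal{D}$ respects the vtree $\mathcal{T}$ and the leaves below $t$ are exactly $V_1$, the factorization induced by $t$ has the form
\begin{align*}
h(\bar{b}) \;=\; \bigvee_{j=1}^W h_1^{(j)}(\bar{b}_1) \land h_2^{(j)}(\bar{b}_2),
\end{align*}
where, by the very construction in the statement of the lemma, $h_1^{(j)}$ is the function computed by $\mathcal{D}_1^{(j)}$ (the sub-DNNF of descendants of $\alpha_j$).

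Next I would verify that $\mathcal{D}_2^{(j)}$ computes the $V_2$-factor $h_2^{(j)}$. The substitutions $\alpha_j \mapsto \top$ and $\alpha_i \mapsto \bot$ for $i \neq j$ kill every disjunct of the above sum except the $j$-th one, which collapses to $h_2^{(j)}(\bar{b}_2)$. What makes this routine is decomposability: on the ``$V_2$-side'' of each $\alpha_i$, every gate depends only on variables in $V_2$, so the syntactic substitution leaves the $h_2^{(j)}$ part intact and yields a DNNF over $V_2$ whose semantics is precisely $h_2^{(j)}$. Lemma~\ref{lem:widthpreserving} ensures that nothing about the widths is disturbed by this surgery.

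With these two identifications, both implications follow at once. For ``$\Leftarrow$'', if $U_1$ is in $\mathcal{D}_1^{(j)}$ and $U_2$ is in $\mathcal{D}_2^{(j)}$ for some $j$, then $h_1^{(j)}(1_{U_1})=h_2^{(j)}(1_{U_2})=1$; because $h_1^{(j)}$ and $h_2^{(j)}$ each depend only on their respective variable blocks, we have $h_1^{(j)}(1_{U_1 \cup U_2}) = h_1^{(j)}(1_{U_1})$ and likewise for $h_2^{(j)}$, so the $j$-th disjunct fires and $h(1_{U_1 \cup U_2})=1$. For ``$\Rightarrow$'', any assignment satisfying $h$ makes at least one disjunct true, and the same variable-block argument recovers the witnessing index $j$.

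The step I expect to be the main obstacle is the second one: cleanly justifying that the syntactic replacement at the $\alpha_i$'s realizes the desired semantic restriction to $h_2^{(j)}$ on all of $\mathcal{D}$, not merely on the immediate parent AND-gates of the $\alpha_i$'s. This reduces to the claim that every gate reachable from the root of $\mathcal{D}$ without passing through some $\alpha_i$ depends only on variables in $V_2$, which is precisely what structuredness with respect to $\mathcal{T}$, combined with decomposability at the edge $(s,t)$, guarantees. Once this observation is made rigorous, the rest of the proof is a one-line evaluation of the factorization above.
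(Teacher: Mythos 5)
Your proof is correct and matches the paper's intent: the paper states this lemma with no written proof, treating it as immediate from the construction of $\mathcal{D}^{(j)}_1$ and $\mathcal{D}^{(j)}_2$, and your argument is precisely the verification the authors leave implicit --- identify the factorization \eqref{eq:decomposable} at the separator, check that the substitution $\alpha_j \mapsto \top$, $\alpha_i \mapsto \bot$ isolates the $V_2$-factor (the point where decomposability and structuredness are actually used), and evaluate the disjunction in both directions. The only stray remark is the appeal to Lemma~\ref{lem:widthpreserving}, which concerns the running-time analysis and plays no role in this semantic equivalence.
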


By Lemma~\ref{lem:decompose}, for some $j = 1, \ldots, W$, the optimal solution $U^* \subseteq V(G)$ is partitioned to $U_1^*$ and $U_2^*$ that satisfy $\mathcal{D}_1^{(j)}$ and $\mathcal{D}_2^{(j)}$, respectively.
In the algorithm, we guess such $j$. 
Then, we solve the problem on $\mathcal{D}_1^{(j)}$ recursively to obtain $U_1^{(j)}$. 
Then, by modifying the function $f(X)$ to $f_{U_1^{(j)}}(U) = f(U_1^{(j)} \cup U) - f(U_1^{(j)})$, we solve the problem on $\mathcal{D}_2^{(j)}$ recursively to obtain $U_2^{(j)}$.
By taking the maximum over all $j$, we obtain a solution $U = U_1^{(j)} \cup U_2^{(j)}$ for some $j$.
The precise implementation is shown in Algorithm~\ref{alg:recursivegreedyMSO}.
We analyze this algorithm.

\begin{algorithm}[tb]
\caption{Recursive Greedy Algorithm on structured DNNF}
\label{alg:recursivegreedyMSO}
\begin{algorithmic}[1]
\Procedure{RecursiveGreedy}{$\mathcal{D}$, $f$}
\State{Compute an edge separator $(s, t) \in E(\mathcal{T})$, where $t$ is a child of $s$. }
\State{Let $\alpha_1, \ldots, \alpha_W$ be the AND-gates associated with $t$.}
\For{$j = 1, \ldots, W$}
\State{Construct $\mathcal{D}^{(j)}_1$ and $\mathcal{D}^{(j)}_2$.}
\State{Let $U^{(j)}_1 \leftarrow \textsc{RecursiveGreedy}(\mathcal{D}^{(j)}_1, f)$.}
\State{Let $U^{(j)}_2 \leftarrow \textsc{RecursiveGreedy}(\mathcal{D}^{(j)}_2, f_{U^{(j)}_1})$.}
\EndFor
  \State{\textbf{return} $U^{(j)}_1 \cup U^{(j)}_2$ that maximizes $f$.}
\EndProcedure
\end{algorithmic}
\end{algorithm}

\begin{lemma}
Algorithm~\ref{alg:recursivegreedyMSO} runs in $n^{O(1)}$ time with an approximation factor of $O(\log n)$.
\end{lemma}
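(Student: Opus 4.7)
The plan is to prove both the runtime bound and the approximation bound by induction on the number $n$ of variables of $\mathcal{D}$, combining standard divide-and-conquer reasoning with submodularity. For the runtime, Lemma~\ref{lem:separator} lets us compute a $(2/3)$-leaf separator in linear time, and by Lemma~\ref{lem:widthpreserving} the widths of the resulting $\mathcal{D}_1^{(j)}$ and $\mathcal{D}_2^{(j)}$ stay bounded by the absolute constant $W$ provided by Theorem~\ref{thm:amarilli}. Each outer iteration issues two recursive calls on instances with at most $2n/3$ variables, plus polynomial-time bookkeeping, so
\[ T(n) \le 2W \cdot T(2n/3) + n^{O(1)}, \]
which resolves to $T(n) = n^{O(\log W)} \cdot n^{O(1)} = n^{O(1)}$ because $W$ is constant.

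For the approximation factor, let $\alpha(n)$ denote the worst-case ratio $f(U^*)/f(\text{ALG})$ over all instances with $n$ variables and all nonnegative monotone submodular $f$; the aim is $\alpha(n) \le \alpha(2n/3) + 1$, which unfolds to $\alpha(n) = O(\log n)$. Fix an optimum $U^*$. By Lemma~\ref{lem:decompose} there is an index $j^*$ for which $U^* = U_1^* \sqcup U_2^*$ with $U_i^*$ feasible in $\mathcal{D}_i^{(j^*)}$, and it suffices to bound $f(U_1^{(j^*)} \cup U_2^{(j^*)})$ from below since the algorithm returns the best solution over all $j$. Because $f_{U_1^{(j^*)}}$ is again a nonnegative monotone submodular function, the induction hypothesis applied to the two recursive calls yields
\[ f(U_1^{(j^*)}) \ge f(U_1^*)/\alpha(2n/3) \quad \text{and} \quad f_{U_1^{(j^*)}}(U_2^{(j^*)}) \ge f_{U_1^{(j^*)}}(U_2^*)/\alpha(2n/3). \]
Monotonicity gives $f_{U_1^{(j^*)}}(U_2^*) \ge f(U_2^*) - f(U_1^{(j^*)})$; submodularity together with nonnegativity and the disjointness of $U_1^*, U_2^*$ gives $f(U_1^*) + f(U_2^*) \ge f(U^*)$. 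Adding the two induction bounds and writing $A = f(U_1^{(j^*)}) + f_{U_1^{(j^*)}}(U_2^{(j^*)})$ yields
\[ \alpha(2n/3) \cdot A \ge f(U_1^*) + f(U_2^*) - f(U_1^{(j^*)}) \ge f(U^*) - A, \]
where the last step uses $f(U_1^{(j^*)}) \le A$ by monotonicity. Rearranging gives $A \ge f(U^*)/(\alpha(2n/3) + 1)$, which closes the induction.

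The main obstacle is the second induction bound: since $U_1^{(j^*)}$ and $U_1^*$ are not nested, we cannot directly compare the marginals $f_{U_1^{(j^*)}}(U_2^*)$ and $f_{U_1^*}(U_2^*)$, so the naive target $A \ge f(U^*)/\alpha(2n/3)$ is out of reach. The plan is to absorb this mismatch by routing the comparison through $f(U_2^*) - f(U_1^{(j^*)})$ using monotonicity alone; the subtracted term is then folded back into $A$ itself and pays for an additive ``$+1$'' per recursion level, which is precisely the overhead responsible for the $O(\log n)$ factor.
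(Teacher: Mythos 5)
Your proof is correct and follows essentially the same route as the paper's: the same recursion-tree count for the running time, and the same two recursive bounds (on $f(U_1^{(j^*)})$ and on the marginal $f_{U_1^{(j^*)}}(U_2^{(j^*)})$) combined into the recurrence $\alpha(n) \le \alpha(2n/3) + 1$. The only cosmetic difference is that you discharge the right-hand side inline via $f(U_1^*) + f(U_2^*) \ge f(U^*)$ and monotonicity, whereas the paper invokes the $d=2$ case of its general Lemma~\ref{lem:prefixdeviation}, which it then reuses in the first-order sections.
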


\begin{proof}
First, we analyze the running time.
Since the number of vertices in each subproblem is at most $2/3$ to the original one, the depth of the recursion is $O(\log n)$.
The branching factor is $W = O(1)$, which is the width of the DNNF.
Here, we used Lemma~\ref{lem:widthpreserving}.
Therefore, the size of the tree is $O(1)^{\log n} = n^{O(1)}$.
Therefore, the running time is $n^{O(1)}$.

Next, we analyze the approximation factor.
Let $\alpha(n)$ be the approximation factor of the algorithm when it is applied to the problem on $n$ variables.
Let $U^*$ be the optimal solution. 
By Lemma~\ref{lem:decompose}, $U^*$ is partitioned into $U_1^*$ and $U_2^*$, and these are feasible to $\mathcal{D}_1^{(j)}$ and $\mathcal{D}_2^{(j)}$ for some $j$, respectively.
Let $U_1$ and $U_2$ be solutions returned at the $j$-th step.
Then, by the definition of $\alpha$, we have
\begin{align}
  \alpha(2n/3) f(U_1) &\ge f(U_1^*) \\
  \alpha(2n/3) (f(U_1 \cup U_2) - f(U_1)) &\ge f(U_1 \cup U_2^*) - f(U_1).
\end{align}
By adding these inequalities, we have
\begin{align}
\label{eq:msoapprox}
	\alpha(2n/3) f(U_1 \cup U_2) \ge f(U_1 \cup U_2^*) - f(U_1) + f(U_1^*).
\end{align}
We simplify the right-hand side of the above inequality.
Here, we prove a slightly general lemma, which will also be used in later sections.
\begin{lemma}
\label{lem:prefixdeviation}
Let $f \colon V \to \mathbb{R}$ be a nonnegative monotone submodular function, and let $U_1, \ldots, U_d, U_1', \ldots, U_d' \subseteq V$ be arbitrary subsets.
Then, the following inequality holds.
\begin{align}
\label{eq:prefixdeviation}
  \sum_{i = 1}^d \left( f(U_1 \cup \cdots \cup U_{i-1} \cup U_i') - f(U_1 \cup \cdots \cup U_{i-1}) \right) 
  \ge  f(U_1' \cup \cdots \cup U_d') - f(U_1 \cup \cdots \cup U_d).
\end{align}
\end{lemma}
\begin{proof}
We prove the lemma by induction.
If $d = 1$, the inequality is reduced to
\begin{align}
	f(U_1') \ge f(U_1') - f(U_1).
\end{align}
This holds since $f$ is a nonnegative function.
If $d \ge 2$, we have
\begin{align}
  & \sum_{i=1}^d \left( f(U_1 \cup \cdots \cup U_{i-1} \cup U_i') - f(U_1 \cup \cdots \cup U_{i-1}) \right) \notag \\ 
  &= f(U_1') + \sum_{i=2}^d \left( f_{U_1}(U_2 \cup \cdots \cup U_{i-1} \cup U_i') - f_{U_1}(U_2 \cup \cdots \cup U_{i-1}) \right),
\end{align}
where $f_{U_1}(U) = f(U_1 \cup U) - f(U_1)$.
Since $f_{U_1} \colon V \to \mathbb{R}$ is also a nonnegative monotone submodular function, we can use the inductive hypothesis as
\begin{align}
	(\text{LHS})  &\ge f(U_1') + f_{U_1}(U_2' \cup \cdots \cup U_d') - f_{U_1}(U_2 \cup \cdots \cup U_d) \notag \\
  &= f(U_1') + f(U_1 \cup U_2' \cup \cdots \cup U_d') - f(U_1 \cup \cdots \cup U_d) \notag \\
  &\ge^{(*)} f(U_1' \cap (U_1 \cup U_2' \cup \cdots \cup U_d')) + f(U_1 \cup U_1' \cup U_2' \cup \cdots \cup U_d') - f(U_1 \cup \cdots U_d) \notag \\
  &\ge^{(**)} f(U_1' \cup U_2' \cup \cdots \cup U_d') - f(U_1 \cup \cdots U_d),
\end{align}
where $(*)$ is the submodularity on the first two terms, and $(**)$ follows from the nonnegativity of the first term and monotonicity of the second term.
This proves the lemma.
\qed
\end{proof}

By using this lemma in \eqref{eq:msoapprox}, we obtain the inequality
\begin{align}
	(\alpha(2n/3) + 1) f(U_1 \cup U_2) \ge f(U_1^* \cup U_2^*),
\end{align}
which shows that the approximation factor of the algorithm satisfies the following recursion.
\begin{align}
	\alpha(n) \le \alpha(2n/3) + 1.
\end{align}
By solving this recursion, we have  $\alpha(n) = O(\log n)$.

\qed
\end{proof}
This lemma proves Theorem~\ref{thm:mso}.


\section{First-Order Logics on Low Degree Graphs}
\label{sec:fodeg}

\subsection{Preliminaries}

In the first-order case, we work on tuples of variables and vertices.
To simplify the notation, we use $\bar{x}$ to represent a tuple $(x_1, \ldots, x_k)$ of variables and $\bar{u}$ to represent a tuple $(u_1, \ldots, u_k)$ of vertices.
We denote by $u_i \in \bar{u}$ to represent $u_i$ an element of $\bar{u}$.
For a set function $f \colon V(G) \to \mathbb{R}$ and a set of tuples $\bar{u}_1, \ldots, \bar{u}_K$, we define $f(\bar{u}_1, \ldots, \bar{u}_K) = f(\bigcup_{I = 1}^K \bigcup_{u \in \bar{u}_I } \{ u \})$.

\subsubsection{Low Degree Graphs.}

A graph class $\mathcal{G}$ has \emph{low degree} if for any $\epsilon > 0$, there exists $n_\epsilon \in \mathbb{N}$ such that for all $G \in \mathcal{G}$ with $n = |V(G)| \ge n_\epsilon$, the maximum degree of $G$ is at most $n^{\epsilon}$~\cite{grohe2011methods}.
A typical example of low degree graph class is the graphs of maximum degree of at most $(\log n)^c$ for some constant $c$.
The low degree graph class and the bounded expansion graph class, which we consider in the next section, are incomparable~\cite{grohe2011methods}.

\subsubsection{Gaifman's Locality Theorem}

Let $\mathrm{dist} \colon V(G) \times V(G) \to \mathbb{Z}$ be the shortest path distance between the vertices of $G$.
For a vertex $u \in V(G)$ and an integer $r \in \mathbb{Z}$, we denote by $N(u, r) = \{ v \in V(G) : \mathrm{dist}(u, v) \le r \}$ the ball of radius $r$ centered at $u$.
Also, for a tuple $\bar{u}$ of vertices and an integer $r \in \mathbb{Z}$, we define $N(\bar{u}, r) = \bigcup_{u \in \bar{u}} N(u, r)$.
For tuples $\bar{u}, \bar{u}'$ of vertices, we define $\mathrm{dist}(\bar{u}, \bar{u}') = \min_{u \in \bar{u}, u' \in \bar{u}'} \mathrm{dist}(u, u')$.
For variables $x$ and $y$, and integer $r \in \mathbb{Z}$, we denote by $\mathrm{dist}(x, y) \le r$ the first-order formula that represents the distance between $x$ and $y$ is less than or equal to $r$. 
For tuples $\bar{x}, \bar{x}'$ of variables, we denote by $\mathrm{dist}(\bar{x}, \bar{x}') \le r$ the formula $ \bigvee_{x \in \bar{x}, x' \in \bar{x}'} \mathrm{dist}(x, x') \le r$. 
Note that it is a first-order formula.

A first-order formula $\phi(x_1, \ldots, x_k)$ is \emph{$r$-local} if it satisfies the property
\begin{align}
	G \models \phi(\bar{u}) \iff G[N(\bar{u}, r)] \models \phi(\bar{u}), 
\end{align}
where $G[\cdot]$ denotes the induced subgraph.
Intuitively, a formula is $r$-local it is determined by the $r$-neighborhood structure around the variables.

One of the most important theorems on the first-order logic is the Gaifman's locality theorem.
\begin{theorem}[Gaifman~\cite{gaifman1982local}]
Every first-order sentence\footnote{A \emph{sentence} is a formula without free variables.} $\phi$ is equivalent to a Boolean combination of sentences of the form
\begin{align}
	\exists y_1, \ldots, \exists y_l \left( \bigwedge_{i=1}^l \phi'(y_i) \land \bigwedge_{i \neq j} \mathrm{dist}(y_i, y_j) > 2r \right),
\end{align}
where $\phi'(y)$ is an $r$-local formula.
Furthermore, such a Boolean combination can be computed from $\phi$.
\blackqed
\end{theorem}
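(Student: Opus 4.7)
The plan is to prove Gaifman's theorem by induction on the quantifier rank of $\phi$, using Ehrenfeucht--Fra\"iss\'e games as the main technical tool. First I would reformulate the target: for each quantifier rank $q$, I want to show that the FO$_q$-theory of a structure (the set of sentences of quantifier rank at most $q$ that it satisfies) is determined by a finite Boolean combination of scattered local sentences of the displayed form. By standard logical manipulations, it suffices to show that any sentence of quantifier rank $q$ cannot distinguish two graphs whose multisets of $r$-local isomorphism types of vertices agree, for a suitable radius $r = r(q)$.

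The key technical ingredient is an equivalence $\equiv_{q,r}$ on pointed structures $(G,\bar{a})$: declare $(G,\bar{a}) \equiv_{q,r} (H,\bar{b})$ if the induced subgraphs on the neighborhoods $N(\bar{a},r)$ and $N(\bar{b},r)$ are isomorphic via a map sending $\bar{a}$ to $\bar{b}$. The main lemma I would prove, by induction on $q$, is that $(G,\bar{a}) \equiv_{q,r(q)} (H,\bar{b})$ implies that $\bar{a}$ and $\bar{b}$ satisfy the same first-order formulas of quantifier rank at most $q$, for an appropriate choice $r(q) = c^q$. The base case $q=0$ is immediate. For the inductive step, given a Spoiler move picking a new vertex $v$, Duplicator responds by case analysis: if $v$ lies within distance roughly $2\cdot r(q-1)$ of $\bar{a}$, then $v$ already sits inside the neighborhood on which the isomorphism is defined, and Duplicator copies it across, verifying that the smaller neighborhoods of radius $r(q-1)$ around the new pair remain within the original isomorphism; otherwise $v$ is far from $\bar{a}$, and Duplicator exploits the agreement of multisets of ``far-away'' local isomorphism types of radius $r(q-1)$ to pick a matching witness. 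The buffer required to keep future neighborhoods disjoint from already-used ones forces the exponential dependence of $r$ on $q$.

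With the main lemma in hand, every first-order sentence $\phi$ of quantifier rank $q$ is, up to equivalence, determined by the multiset of $r$-local isomorphism types of vertices for $r = r(q)$. Since there are only finitely many such types $\tau$, and each can be described by an $r$-local formula $\phi_\tau(y)$ stating ``$y$ realizes type $\tau$'', the information in the multiset can be captured by Boolean combinations of statements of the form ``there are at least $\ell$ pairwise far-apart witnesses of type $\tau$'', which are exactly
\[
\exists y_1,\ldots,\exists y_\ell \left( \bigwedge_{i=1}^\ell \phi_\tau(y_i) \land \bigwedge_{i \neq j} \mathrm{dist}(y_i,y_j) > 2r \right).
\]
Effectiveness of the construction follows from the fact that the induction is constructive at each step: the rank bounds the radius, the radius bounds the finite list of relevant local types, and each type carries an explicit $r$-local defining formula.

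The main obstacle I expect is the careful bookkeeping in the inductive step, in particular verifying that the chosen radius function $r(q)$ genuinely allows Duplicator to survive one more round while keeping the balls around successive choices either strictly contained in the existing isomorphism or strictly disjoint from it; getting these constants right is the delicate part, while the overall strategy is standard. A secondary difficulty is the final rewriting step, i.e., extracting the explicit scattered-sentence normal form rather than merely knowing that the theory is determined by local types; this requires observing that every boolean combination of ``number-of-types'' statements rewrites into the displayed form, and that the $r$-locality of $\phi_\tau$ is preserved along the way.
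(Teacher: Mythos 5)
The paper offers no proof of this statement: it is Gaifman's locality theorem, imported verbatim from \cite{gaifman1982local}, so there is no in-paper argument to compare against. Your Ehrenfeucht--Fra\"iss\'e strategy is the standard modern route to the theorem (as opposed to Gaifman's original syntactic induction on formulas), and the overall architecture---a radius $r(q)$ growing exponentially in the quantifier rank, a near/far case split for Duplicator, and a final rewriting into threshold-counting scattered sentences---is the right one.

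However, the main lemma as you state it is false, and this is a genuine gap rather than bookkeeping. You define $(G,\bar a)\equiv_{q,r}(H,\bar b)$ purely by an isomorphism of the induced $r$-neighborhoods carrying $\bar a$ to $\bar b$, and claim this alone implies agreement on all formulas of quantifier rank $q$. Take $G$ a single isolated vertex $a$ and $H$ two isolated vertices with $b$ one of them: the $r$-neighborhoods of $a$ and $b$ are isomorphic for every $r$, yet the rank-$1$ formula $\exists y\,(y\neq x)$ separates them. The far case of your inductive step silently invokes ``agreement of multisets of far-away local isomorphism types,'' but that agreement is not part of the hypothesis $\equiv_{q,r}$, so Duplicator cannot use it, and nothing guarantees it is preserved as an invariant after a round is played. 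The fix is to strengthen the induction hypothesis to a conjunction: (i) the $r(q)$-neighborhoods of the pebbled tuples are isomorphic, and (ii) $G$ and $H$ satisfy the same basic local sentences asserting at least $\ell$ pairwise $2r'$-scattered realizations of each $r'$-local type, for all thresholds $\ell$ up to some $m(q)$ and radii $r'$ up to $r(q)$, with both components degrading (smaller radius, smaller threshold) after each round; the threshold is what lets Duplicator find a fresh scattered witness avoiding the neighborhoods already in play. Relatedly, ``determined by the multiset of local types'' cannot be taken literally: exact multisets give infinitely many classes, so you must work with counts truncated at $m(q)$, and you must count \emph{scattered} tuples (which is what the displayed sentences express) rather than all vertices realizing a type; conflating these two counts is a second place where the argument as written does not close.
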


We can obtain the Gaifman's locality theorem \emph{for formulas} by considering all the partitions of the variables as follows.
\begin{lemma}[Equation (1) in Segoufin and Vigny~\cite{segoufin2017constant}]
\label{lem:gaifmanformula}
Every first-order formula $\phi(x_1, \ldots, x_k)$ is equivalent to a formula of the form
\begin{align}
\label{eq:gaifman}
	\bigvee_{\Pi = (\bar{x}_1, \ldots, \bar{x}_K): \text{partition of } \{x_1, \ldots, x_k\}} \left( \bigwedge_{I=1}^K \phi_I^\Pi(\bar{x}_I) \land D^\Pi(\bar{x}_1, \ldots, \bar{x}_K) \right),
\end{align}
where $\phi_1^\Pi(\bar{x}_1), \ldots, \phi_K^\Pi(\bar{x}_K)$ are $r$-local formulas, and $D^\Pi(\bar{x}_1, \ldots, \bar{x}_K)$ expresses the fact that $\mathrm{dist}(\bar{x}_I, \bar{x}_J) > 2 r$ for all $I \neq J$, and no refinement of $\Pi$ satisfies this property.
Furthermore, such a formula can be computed from $\phi(x_1, \ldots, x_k)$.
\blackqed
\end{lemma}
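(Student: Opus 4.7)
The plan is to derive this formula version of Gaifman's locality theorem from the sentence version already stated, combined with an exhaustive case analysis on how the free variables cluster geometrically. I would treat $x_1, \ldots, x_k$ as fresh constants, apply the sentence-level Gaifman theorem to the resulting sentence, and then reorganize the output according to the cluster partition induced by the assignment.

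First I would fix $r$ large enough to dominate every locality radius that arises from the applications of the sentence theorem below; a single $r$ then serves uniformly across all partitions. Given an assignment $\bar{u} = (u_1, \ldots, u_k)$, let $\Pi(\bar{u})$ be the unique finest partition of $\{x_1, \ldots, x_k\}$ with $\mathrm{dist}(u_i, u_j) > 2r$ whenever $x_i$ and $x_j$ lie in different parts; equivalently, take the connected components of the auxiliary graph on $\{x_1, \ldots, x_k\}$ whose edges are the pairs within distance $2r$. The predicate $D^\Pi$ of the statement then encodes ``$\Pi(\bar{u}) = \Pi$,'' and because the partitions are mutually exclusive and exhaustive we obtain the tautology
\begin{align*}
\phi(\bar{x}) \;\equiv\; \bigvee_{\Pi} \bigl( \phi(\bar{x}) \land D^\Pi(\bar{x}_1, \ldots, \bar{x}_K) \bigr).
\end{align*}
It therefore suffices to show that, for each fixed $\Pi$, the conjunct $\phi \land D^\Pi$ is equivalent to $\bigwedge_I \phi_I^\Pi(\bar{x}_I) \land D^\Pi$ for some $r$-local formulas $\phi_I^\Pi$.

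For a fixed $\Pi$, the predicate $D^\Pi$ forces the $r$-neighborhoods $N(\bar{u}_I, r)$ to be pairwise disjoint. I would now apply Gaifman's sentence theorem to the sentence obtained from $\phi$ by replacing each $x_i$ with a new constant symbol, producing an equivalent Boolean combination in which each conjunct is either a basic local sentence (independent of the constants) or a formula whose locality ball surrounds a subtuple of $\bar{x}$. Once the neighborhoods around distinct parts are disjoint, a standard type-enumeration argument, exploiting that on a disjoint union of $r$-balls the $r$-type of a tuple is determined by the product of its per-ball $r$-types, lets me rewrite the Boolean combination as a per-part conjunction $\bigwedge_I \phi_I^\Pi(\bar{x}_I)$ with each $\phi_I^\Pi$ in $r$-local form.

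The main obstacle is exactly this factorization step: translating a Boolean combination of local formulas whose centers may straddle several parts into a clean per-part conjunction, and, in particular, handling the basic local sentences of Gaifman's theorem, which are globally defined and not \emph{a priori} attachable to any single part. Disjointness of the $r$-neighborhoods under $D^\Pi$, together with the finiteness of the collection of $r$-types realized in each part, makes the combinatorics manageable; the final effectiveness claim (``Furthermore, such a formula can be computed\ldots'') then follows from effectiveness of the sentence version of Gaifman's theorem and of the underlying type enumeration.
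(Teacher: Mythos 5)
The paper does not actually prove this lemma: it is stated with a citation to Segoufin and Vigny and a closed box, so there is no in-paper argument to compare against. Judged on its own merits, your derivation route --- freeze the free variables as constants, apply the sentence-level Gaifman theorem, case-split on the cluster partition $\Pi(\bar{u})$ induced by the $\mathrm{dist} \le 2r$ graph, and then factor the resulting Boolean combination across the now-disjoint $r$-balls of the parts --- is the standard way this formula version is obtained, and the tautology $\phi \equiv \bigvee_\Pi (\phi \land D^\Pi)$ together with the Feferman--Vaught-style type composition over a disjoint union of balls is the right skeleton.

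However, the step you yourself flag as ``the main obstacle'' is a genuine gap, not a technicality, and your proposal does not close it. The basic local sentences produced by Gaifman's theorem assert the existence of scattered witnesses anywhere in the graph; they are not determined by $G[N(\bar{u}_I, r)]$ for any part $I$, so they cannot be rewritten as $r$-local formulas $\phi_I^\Pi(\bar{x}_I)$ no matter how the types in each part are enumerated. Indeed, taken as a claim of logical equivalence the target form is false as stated: for $k=1$ the formula $\phi(x) = \exists y\, \exists z\, e(y,z)$ admits only the trivial partition, yet it is not equivalent to any $r$-local formula of $x$ (compare an isolated vertex in an edgeless graph with the same vertex in a graph containing one far-away edge). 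The correct resolutions are either to keep a Boolean combination of basic local sentences as an additional conjunct in each disjunct (which is how the cited normal form is actually phrased), or to read ``equivalent'' relative to the fixed input graph, evaluating the sentences once during preprocessing --- which is all the paper's algorithm needs, since it only ever checks feasibility on the given $G$. Your proof should state explicitly which of these it adopts; saying the combinatorics is ``manageable'' leaves the one nontrivial point of the lemma unproved. A secondary, smaller issue: the per-part factorization naturally yields a disjunction over tuples of local types, i.e., a disjunction of conjunctions for each $\Pi$ rather than a single conjunction, so the disjunction index must be enlarged beyond partitions (harmless for the application, but worth noting for fidelity to the stated form).
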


\subsection{Proof of Theorem~\ref{thm:fodeg}}

First of all, we transform the given formula into the form \eqref{eq:gaifman} in $O(1)$ time.
Then, we solve the problem for each disjunction of the formula.
By taking the maximum of the solutions for the disjunctions, we obtain a solution.
Thus, we now consider the case where formula $\phi(x_1, \ldots, x_k)$ is in the following form:
\begin{align}
\label{eq:gaifmanformula}
	\phi(x_1, \ldots, x_k) = \bigwedge_{I=1}^K \phi_I(\bar{x}_I) \land D(\bar{x}_1, \ldots, \bar{x}_K),
\end{align}
and the optimal solution $(\bar{u}_1^*, \ldots, \bar{u}_K^*)$ also satisfies this formula.

To design an approximation algorithm, we introduce the following concept.
A feasible solution $(\bar{u}_1, \ldots, \bar{u}_K)$ is an \emph{$\alpha$-prefix dominating solution} if 
\begin{align}
\label{eq:alphaprefixgreedy}
	\alpha \left( f(\bar{u}_1, \ldots, \bar{u}_I) - f(\bar{u}_1, \ldots, \bar{u}_{I-1}) \right) \ge 
	f(\bar{u}_1, \ldots, \bar{u}_I^*) - f(\bar{u}_1, \ldots, \bar{u}_{I-1})
\end{align}
for all $I = 1, \ldots, K$.
\begin{lemma}
\label{lem:prefixgreedy}
Let $(\bar{u}_1, \ldots, \bar{u}_K)$ be an $\alpha$-prefix dominating solution.
Then, it is an $\alpha+1$-approximate solution.
\end{lemma}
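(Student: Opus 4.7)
The plan is to sum the defining inequalities \eqref{eq:alphaprefixgreedy} over $I = 1, \ldots, K$ and then massage the two sides using telescoping on the left and Lemma~\ref{lem:prefixdeviation} on the right. Concretely, adding up \eqref{eq:alphaprefixgreedy} yields
\begin{align}
\alpha \sum_{I=1}^K \bigl( f(\bar{u}_1, \ldots, \bar{u}_I) - f(\bar{u}_1, \ldots, \bar{u}_{I-1}) \bigr)
\ge \sum_{I=1}^K \bigl( f(\bar{u}_1, \ldots, \bar{u}_{I-1}, \bar{u}_I^*) - f(\bar{u}_1, \ldots, \bar{u}_{I-1}) \bigr),
\end{align}
with the convention that the empty tuple evaluates to $f(\emptyset) \ge 0$.

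Next, the left-hand side telescopes: the sum collapses to $\alpha(f(\bar{u}_1, \ldots, \bar{u}_K) - f(\emptyset))$, and nonnegativity of $f$ gives the upper bound $\alpha f(\bar{u}_1, \ldots, \bar{u}_K)$. For the right-hand side, I apply Lemma~\ref{lem:prefixdeviation} with $U_i$ taken to be the underlying vertex set of $\bar{u}_i$ and $U_i'$ the vertex set of $\bar{u}_i^*$, observing that $f$ in this section is defined on vertex subsets so the identification is consistent with the convention $f(\bar{u}_1, \ldots, \bar{u}_I) = f(\bigcup_{J \le I} \bigcup_{u \in \bar{u}_J}\{u\})$. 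Lemma~\ref{lem:prefixdeviation} then lower bounds the right-hand side by $f(\bar{u}_1^*, \ldots, \bar{u}_K^*) - f(\bar{u}_1, \ldots, \bar{u}_K)$.

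Combining the two bounds produces
\begin{align}
\alpha f(\bar{u}_1, \ldots, \bar{u}_K) \ge f(\bar{u}_1^*, \ldots, \bar{u}_K^*) - f(\bar{u}_1, \ldots, \bar{u}_K),
\end{align}
and rearranging gives $(\alpha+1) f(\bar{u}_1, \ldots, \bar{u}_K) \ge f(\bar{u}_1^*, \ldots, \bar{u}_K^*)$, which is exactly the claim. There is no real obstacle here: the only subtlety is to confirm that Lemma~\ref{lem:prefixdeviation} applies verbatim once we pass from tuples to their underlying vertex sets, and that the telescoping step uses the nonnegativity of $f$ (not any extra normalization).
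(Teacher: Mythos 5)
Your proof is correct and follows essentially the same route as the paper: sum the defining inequalities over $I$, telescope the left-hand side (using $f(\emptyset) \ge 0$), lower-bound the right-hand side via Lemma~\ref{lem:prefixdeviation} applied to the underlying vertex sets, and rearrange. The paper's version is just a more compressed write-up of the same argument.
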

\begin{proof}
By adding \eqref{eq:alphaprefixgreedy} over $I = 1, \ldots, K$, we obtain
\begin{align}
\label{eq:alphaprefixgreedysum}
	\alpha f(\bar{u}_1, \ldots, \bar{u}_K) 
    \ge 
	\sum_{I=1}^K \left( f(\bar{u}_1, \ldots, \bar{u}_I^*) - f(\bar{u}_1, \ldots, \bar{u}_{I-1}) \right) 
    \ge^{(*)} f(\bar{u}_1^*, \ldots, \bar{u}_K^*) - f(\bar{u}_1, \ldots, \bar{u}_K).
\end{align}
Here, $(*)$ follows from Lemma~\ref{lem:prefixdeviation}.
By moving the right-most term to the left-hand side, we obtain the lemma.
\qed
\end{proof}

Lemma~\ref{lem:prefixgreedy} implies that we only have to construct a prefix dominating solution.
A natural approach will be a greedy algorithm.
Suppose that we have a partial solution $(\bar{u}_1, \ldots, \bar{u}_{I-1})$.
Then, we find a solution $\bar{u}_I$ for the $I$-th component by solving the subproblem.
Here, the exact $I$-th solution is efficiently obtained  as follows.
\begin{lemma}
\label{lem:exhaustivelowdeg}
For given $\bar{u}_1, \ldots, \bar{u}_{I-1}$, we find the exact solution $\bar{u}_I$ to $\max \{ f_{\bar{u}_1, \ldots, \bar{u}_{I-1}}(\bar{u}_I) : \bar{u}_1, \ldots, \bar{u}_I \text{ satisfies $\phi$ restricted on these variables} \}$ in $O(n^{1 + \epsilon r k})$ time.
\end{lemma}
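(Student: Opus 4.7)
The plan is to bound the search space using two structural facts and then do brute-force enumeration. First, observe that the variables $\bar{x}_I$ in a single part of the Gaifman decomposition form a ``cluster'' in the $2r$-proximity sense: since $\Pi$ is required to be minimal (``no refinement of $\Pi$ satisfies this property''), the elements of $\bar{x}_I$ cannot be split into two groups all pairwise at distance $> 2r$. By a standard connectivity argument on the graph whose edges join variables at distance $\le 2r$, every two variables in $\bar{x}_I$ are joined by a chain of length at most $|\bar{x}_I|-1 \le k-1$ of such edges. Hence, fixing any ``anchor'' $u \in \bar{u}_I$, every other entry of $\bar{u}_I$ lies inside $N(u, 2r(k-1)) \subseteq N(u, 2rk)$.

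Next, I invoke the low-degree hypothesis: for any $\delta > 0$ and $n$ large enough, $\Delta(G) \le n^\delta$, so $|N(u, 2rk)| = O(n^{2\delta rk})$. Enumerate candidates $\bar{u}_I$ by first choosing the anchor ($n$ ways), then independently choosing each of the at most $k-1$ remaining entries from $N(u, 2rk)$. The total number of candidates is $O(n^{1 + 2\delta rk(k-1)})$. Picking $\delta := \epsilon/(2k(k-1))$ (still a valid positive constant, so the low-degree bound applies for all sufficiently large $n$) gives at most $O(n^{1+\epsilon rk})$ candidates.

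For each candidate $\bar{u}_I$, I need to verify feasibility and evaluate the marginal gain. Feasibility splits into two parts. The distance constraint $D^\Pi$ with the already-fixed tuples $\bar{u}_1, \ldots, \bar{u}_{I-1}$ is checked by testing $\mathrm{dist}(\bar{u}_I, \bar{u}_J) > 2r$ for $J < I$; this only needs the ball $N(\bar{u}_I, 2r)$, of size $n^{o(1)}$. The $r$-local property $\phi_I(\bar{u}_I)$ is, by definition of $r$-locality, equivalent to $G[N(\bar{u}_I, r)] \models \phi_I(\bar{u}_I)$; since $|N(\bar{u}_I, r)| = O(n^{\delta rk})$ and $|\phi_I|$ is constant, brute-force model checking is done in a time that is $n^{o(1)}$ per candidate (and in particular subsumed when we readjust $\delta$). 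Each evaluation of $f_{\bar{u}_1, \ldots, \bar{u}_{I-1}}(\bar{u}_I)$ is one oracle call. Taking the maximum over all feasible candidates yields the exact optimum within the claimed $O(n^{1+\epsilon rk})$ time bound.

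The part I expect to require the most care is the ``cluster diameter'' claim, i.e., extracting from the minimality of $\Pi$ that every vertex in $\bar{u}_I$ lies within distance $2rk$ of a single anchor; the rest is bookkeeping. Concretely, consider the graph on $\bar{x}_I$ whose edges are pairs at distance $\le 2r$: if it were disconnected, the induced bipartition would refine $\Pi$ while still satisfying $D^\Pi$, contradicting minimality. Thus the graph is connected, its diameter is at most $|\bar{x}_I|-1$, and each graph-distance step costs at most $2r$ in $G$, giving the bound $2rk$ used above.
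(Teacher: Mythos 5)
Your proposal is correct and follows essentially the same route as the paper's proof: guess an anchor vertex for one variable of $\bar{x}_I$ ($n$ choices), confine the remaining entries to a low-degree ball around it, and brute-force over all assignments while checking the $r$-local formula and the distance constraints to $\bar{u}_1,\ldots,\bar{u}_{I-1}$. In fact your connectivity argument from the minimality of $\Pi$, giving radius $2r(k-1)$ and then absorbing the larger exponent by shrinking the degree parameter, is more precise than the paper's own one-line claim that the remaining entries lie in the $r$-neighborhood of the anchor; the paper silently relies on the same kind of re-parametrization of $\epsilon$ at the end of the section.
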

\begin{proof}
We guess vertex $u_{i} \in V(G)$ that is assigned to the first variable of $\bar{x}_I$.
Then, by Lemma~\ref{lem:gaifmanformula}, all the other variables should be assigned by the vertices in the $r$-neighborhood of $u_i$.
The number of vertices in the $r$-neighborhood of $u_i$ is at most $n^{\epsilon r}$; thus, we can check all the assignments in $O(n^{\epsilon r k})$ time.
Therefore, we can check all the assignments in $O(n^{1 + \epsilon r k})$ time.
\qed
\end{proof}

By iterating this procedure, we obtain a (possibly partial) solution.
In this procedure, if the obtained solution is not a partial, and $I$-th component $\bar{u}_I^*$ of the optimal solution is feasible to the $I$-th subproblem for all $I$, the obtained solution is a $1$-prefix dominating solution, which is a $2$-approximate solution.
We call such a situation \emph{$\bar{u}^*$ is prefix feasible to $\bar{u}$}.
When is $\bar{u}^*$ not prefix feasible to $\bar{u}$?
By observing \eqref{eq:gaifman}, we can see that it is infeasible \emph{only if} the distance between $\bar{u}_1, \ldots, \bar{u}_{I-1}$ and $\bar{u}_I^*$ is less than or equal to $2 r$ for some $I$.
If we know $\bar{u}^*$, it is easy to avoid such a solution. 
Here, we develop a method to avoid such a solution without knowing $\bar{u}^*$.

Our idea is the following.
Suppose that we have a (possibly partial) solution $\bar{u}$ such that $\bar{u}^*$ is not prefix feasible to $\bar{u}$.
Then, there exist $u_i^* \in \bar{u}_I^*$ such that $\mathrm{dist}(\bar{u}_J, u_i^*) \le 2 r$ for some $J < I$.
This means that, at least one $u_i^*$ is in $N(\bar{u}, 2 r) = N(u_1, 2r) \cup \cdots \cup N(u_k, 2 r)$.
Since the number of possibilities ($i$ and $u_i^*$) is $k^2 n^{2 \epsilon r}$, we can suspect it by calling the procedure recursively until we suspect $k$ assignment.
We call this technique \emph{suspect-and-recurse}.

The detailed implementation is shown in Algorithm~\ref{alg:greedylowdeg} which calls Algorithm~\ref{alg:suspectlowdeg} as a subroutine.
The algorithm maintains a current guess of some entries of optimal solution as a list $F \subseteq \mathbb{N} \times V(G)$, i.e., $(i, v) \in F$ means that we guess the $i$-th entry of the optimal solution is $v$.
Then, the feasible set to the $I$-th subproblem when the $J$-th solutions ($J < I$) and $F$ are specified is given by
\begin{align}
\mathcal{F}_I = \left\{ \bar{u}_I : \phi(\bar{u}_I) \wedge \bigwedge_{J < I} \mathrm{dist}(\bar{u}_J, \bar{u}_I) > 2r \wedge \bigwedge_{(i,v) \in F: x_j \not\in \bar{x}_I} \mathrm{dist}(\bar{u}_I, v) > 2r \right\}.
\end{align}

\begin{lemma}
Algorithm~\ref{alg:suspectlowdeg} runs in $O(n^{1 + 3 \epsilon r k})$ time with an approximation factor of $2$.
\end{lemma}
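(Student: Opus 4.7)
The plan is to bound the running time and the approximation factor of Algorithm~\ref{alg:suspectlowdeg} separately via a recursion-tree argument. For the running time, I will analyze the tree of recursive calls. At each call the algorithm greedily processes the $K \le k$ Gaifman blocks one at a time, invoking Lemma~\ref{lem:exhaustivelowdeg} on each to pick the best $\bar{u}_I$ in the constrained feasible set $\mathcal{F}_I$; this costs $O(n^{1+\epsilon r k})$ per call, since the extra $F$-induced and prefix-distance conditions can only shrink the search space. When the greedy fails, the algorithm branches on a candidate pair $(i,v) \in \{1,\ldots,k\} \times N(\bar{u},2r)$ for an entry of the optimal solution that could be blocking feasibility. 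Since the maximum degree of $G$ is at most $n^{\epsilon}$, I will use $|N(\bar{u},2r)| \le k n^{2\epsilon r}$, giving a branching factor of at most $k^2 n^{2\epsilon r}$. Each recursion enlarges $F$ by one entry and the recursion halts when $|F|=k$, so the depth is at most $k$ and the tree has $O(n^{2\epsilon r k})$ leaves (treating $k$ as a constant). Multiplying the per-node cost by the tree size yields $O(n^{1+3\epsilon r k})$.

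For the approximation factor, I will exhibit a single leaf of the recursion tree that outputs a $1$-prefix dominating solution; by Lemma~\ref{lem:prefixgreedy} with $\alpha=1$ such a leaf already gives a $2$-approximation, and the algorithm returns the best solution over all leaves. The witness leaf is identified by walking down the branch whose suspect list $F$ only records genuine entries of the optimal solution $(\bar{u}_1^*,\ldots,\bar{u}_K^*)$. I will maintain as an invariant that ``$(\bar{u}_1^*,\ldots,\bar{u}_K^*)$ is consistent with $F$ and prefix feasible to the current partial solution $\bar{u}$''. Whenever the greedy pass fails at block $I$, the Gaifman-distance clauses in~\eqref{eq:gaifmanformula} force some $u_i^* \in \bar{u}_J^*$ (with $J > I$) to lie in $N(\bar{u},2r)$, so the algorithm enumerates $(i,u_i^*)$ among its branches; descending into this branch preserves the invariant. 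Once the walk reaches a node where the greedy succeeds, each $\mathcal{F}_I$ contains $\bar{u}_I^*$, and therefore $f_{\bar{u}_1,\ldots,\bar{u}_{I-1}}(\bar{u}_I) \ge f_{\bar{u}_1,\ldots,\bar{u}_{I-1}}(\bar{u}_I^*)$, which is exactly the $1$-prefix dominating inequality~\eqref{eq:alphaprefixgreedy}.

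The main obstacle will be the ``consistency'' half of the invariant: I have to check that the distance constraints in $\mathcal{F}_I$ imposed by a correctly-guessed $(i,v) \in F$ with $x_i$ in a block $J \ne I$ do not exclude $\bar{u}_I^*$. Concretely, this amounts to showing $\mathrm{dist}(\bar{u}_I^*, v) > 2r$ whenever $u_i^* = v$, which I expect to read off from the clause $D(\bar{u}_1^*,\ldots,\bar{u}_K^*)$ of the Gaifman normal form~\eqref{eq:gaifmanformula} applied to the partition realized by the optimum. Once this compatibility is packaged as the maintained invariant, the inductive descent along the correct branch closes, and combined with the running-time analysis it completes the proof of the lemma.
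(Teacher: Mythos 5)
Your proposal is correct and follows essentially the same route as the paper: the same recursion-tree count (depth $k$, branching factor $k^2 n^{2\epsilon r}$, per-node cost from Lemma~\ref{lem:exhaustivelowdeg}) for the running time, and the same argument that some branch of the suspect-and-recurse tree keeps $\bar{u}^*$ prefix feasible, so Lemma~\ref{lem:prefixgreedy} with $\alpha=1$ gives the factor of $2$. Your explicit invariant --- in particular checking via the clause $D$ of \eqref{eq:gaifmanformula} that correctly guessed entries of $F$ never exclude $\bar{u}_I^*$ from $\mathcal{F}_I$ --- spells out a step the paper leaves to the surrounding discussion, but it is the same proof.
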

\begin{proof}
First, we analyze the running time.
The algorithm constructs a recursion tree, whose depth is $k$, and the branching factor is $k |N(\bar{u}, 2r)| \le k^2 n^{2 \epsilon r}$.
Thus, the size of the tree is at most $k^{2 k} n^{2 \epsilon r k}$.
In each recursion, the algorithm calls Algorithm~\ref{alg:greedylowdeg} that runs in $O(n^{1 + \epsilon k})$ time by Lemma~\ref{lem:exhaustivelowdeg} (with a modification to handle $F$).
Therefore, the total running time is $O(n^{1 + 3 \epsilon r k})$.
  
Next, we analyze the approximation factor.
By the above discussion, the algorithm seeks at least one solution $\bar{u}$ such that $\bar{u}^*$ is prefix feasible to $\bar{u}$.
Such a solution has an approximation factor of $2$ because of Lemma~\ref{lem:prefixgreedy}. 
Therefore, we obtain the lemma.
\qed
\end{proof}

We obtain Theorem~\ref{thm:fodeg} immediately from this lemma.
We replace $\epsilon$ by $\epsilon / 3 r k$. 
If $n \le n_\epsilon$, we solve the problem by an exhaustive search, which gives the exact solution in $O(1)$ time.
Otherwise, we apply Algorithm~\ref{alg:suspectlowdeg}.
This gives the desired result.

\begin{algorithm}[tb]
\caption{Greedy algorithm for low degree graphs.}
\label{alg:greedylowdeg}
\begin{algorithmic}[1]
\Procedure{GreedyLowDeg}{$F$}
\For{$I = 1, \ldots, K$}
\State{$\bar{u}_I \leftarrow \argmax \{ f(\bar{u}_1, \ldots, \bar{u}_{I-1}, \bar{u}_I) : \bar{u}_I \in \mathcal{F}_I \}$}
\EndFor
\State{\textbf{return} $\bar{u}_1, \ldots, \bar{u}_K$}
\EndProcedure
\end{algorithmic}
\end{algorithm}

\begin{algorithm}[tb]
\caption{Suspect-and-recurse algorithm for low degree graphs.}
\label{alg:suspectlowdeg}
\begin{algorithmic}[1]
\Procedure{SuspectRecurseLowDeg}{$F$}
\State{$\bar{u} \leftarrow \textsc{GreedyLowDeg}(F)$}
\If{$|F| < k$}
\For{$i = 1, \ldots, k$}
\For{$v \in N(\bar{u}, 2r)$}
\State{$\bar{u}_{i,v} \leftarrow \textsc{SuspectRecurseLowDeg}(F \cup \{ (i, v) \})$}
\EndFor
\EndFor
\EndIf
\State{\textbf{return} best solution among $\bar{u}$ and $\bar{u}_{i,v}$}
\EndProcedure
\end{algorithmic}
\end{algorithm}

\begin{remark}
If $f$ is a linear function, we can obtain the exact solution using Algorithm~\ref{alg:suspectlowdeg} as follows.
First, we enumerate all the possibilities that which variables take the same value. 
Then, for each possibility, we apply Algorithm~\ref{alg:suspectlowdeg} after removing the redundant variables.
We see that, if the variables in a $1$-prefix dominating solution $\bar{u} = (\bar{u}_1, \ldots, \bar{u}_K)$ are pairwise disjoint, instead of \eqref{eq:alphaprefixgreedysum}, the following inequality holds.
\begin{align}
	f(\bar{u}_1, \ldots, \bar{u}_K) 
    \ge 
	\sum_{I=1}^K \left( f(\bar{u}_1, \ldots, \bar{u}_I^*) - f(\bar{u}_1, \ldots, \bar{u}_{I-1}) \right) 
    = f(\bar{u}_1^*, \ldots, \bar{u}_K^*) 
\end{align}
This means that a $1$-prefix dominating solution is the optimal solution. 
Therefore, this procedure gives the optimal solution.
\qed
\end{remark}


\section{First-Order Logics on Bounded Expansion Graphs}
\label{sec:foexp}

\subsection{Preliminaries}

\subsubsection{Bounded Expansion Graphs.}

Let $\vec{G} = (V(G), \vec{E}(G))$ be a directed graph. 
A \emph{$1$-transitive fraternal augmentation} is a minimal supergraph $\vec{H} = (V(H), \vec{E}(H))$ of $\vec{G}$ such that
\begin{description}
  \item[(transitivity)] if $(u, v) \in \vec{E}(G)$ and $(v, w) \in \vec{E}(G)$ then $(u, w) \in \vec{E}(H)$, and
  \item[(fraternality)] if $(u, v) \in \vec{E}(G)$ and $(u, w) \in \vec{E}(G)$ then at least $(v, w) \in \vec{E}(H)$ or $(w, v) \in \vec{E}(H)$
\end{description}
holds. By the minimality condition, we must have $V(H) = V(G)$.
A \emph{transitive fraternal augmentation} is a sequence of $1$-transitive fraternal augmentations $\vec{G} = \vec{G}_0 \subseteq \vec{G}_1 \subseteq \cdots$.
Note that a transitive fraternal augmentation is not determined uniquely due to the freedom of choice of the fraternal edges.

We say that a class $\mathcal{G}$ of graphs has \emph{bounded expansion}~\cite{nevsetvril2008grad1} if there exists a function $\Gamma' \colon \mathbb{N} \to \mathbb{N}$ such that, for each $G \in \mathcal{G}$, there exists an orientation $\vec{G}$ and a transitive fraternal augmentation $\vec{G} = \vec{G}_0 \subseteq \vec{G}_1 \subseteq \cdots$ where $\delta^-(\vec{G}_i) \le \Gamma'(i)$, where $\delta^-(\vec{G}_i)$ is the maximum in-degree of $\vec{G}_i$.%
\footnote{There are several equivalent definitions for bounded expansion. We choose the transitive fraternal augmentation for the definition because it is a kind of degree boundedness, and therefore, it looks similar to degree lowness.}
In a class of graphs having bounded expansion, we compute a suitable transitive fraternal augmentation efficiently as follows.

\begin{theorem}[Ne{\v{s}}et{\v{r}}il and Ossona de Mendez~\cite{nevsetvril2008grad2}]
\label{thm:nevsetvril2008grad2}
For a class $\mathcal{G}$ of graphs of bounded expansion, we can compute a transitive fraternal augmentation $\vec{G}_0 \subseteq \vec{G}_1 \subseteq \cdots \subseteq \vec{G}_i$ such that $\delta^-(\vec{G}_i) \le \Gamma(i)$ for some function $\Gamma \colon \mathbb{N} \to \mathbb{N}$.\footnote{$\Gamma(i)$ can be constant factor larger than the optimal $\Gamma'(i)$.}
\blackqed
\end{theorem}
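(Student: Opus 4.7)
The plan is to proceed by induction on $i$, computing the augmentations one at a time. For the base case, I would invoke the fact that bounded expansion at level $0$ is just bounded degeneracy: any $G \in \mathcal{G}$ is $\Gamma'(0)$-degenerate, so a smallest-last ordering of $V(G)$ followed by orienting each edge toward the earlier endpoint yields $\vec{G}_0$ with $\delta^-(\vec{G}_0) \le \Gamma'(0)$, computable in linear time.

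For the inductive step, assume $\vec{G}_i$ with $\delta^-(\vec{G}_i) \le \Gamma(i)$ is in hand. To build $\vec{G}_{i+1}$, I would scan every vertex $w$ and enumerate (i) all transitive witnesses $(u, w)$ arising from $(u, v), (v, w) \in \vec{E}(\vec{G}_i)$, whose orientation is forced, and (ii) all fraternal witnesses $\{v, w\}$ arising from $(u, v), (u, w) \in \vec{E}(\vec{G}_i)$, which still need an orientation. Both lists are produced in $O(n \cdot \Gamma(i)^2)$ time by iterating over pairs of in-neighbors (for (i)) and pairs of out-neighbors (for (ii)). To choose orientations for the fraternal edges, compute a smallest-last ordering of the underlying undirected graph $H$ of the full augmented edge set, and orient each new edge from the later endpoint to the earlier one; this gives $\vec{G}_{i+1}$ with in-degree bounded by the degeneracy of $H$.

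What remains, and what I expect to be the main obstacle, is bounding the degeneracy of $H$ by a constant factor times $\Gamma'(i+1)$. The key observation is that each edge of $H$ corresponds to a short walk in the original graph $G$: a transitive closure pastes two walks end to end, while a fraternal closure glues two walks at a common tail. Iterating these rules for $i+1$ augmentation steps shows that $H$ is a shallow minor (or shallow topological minor) of $G$ at a depth $d(i)$ depending only on $i$. The bounded-expansion hypothesis at grad level $d(i)$ then bounds $|E(H)|/|V(H)|$ by a function of $\Gamma'(i+1)$, which forces degeneracy at most, say, $2\Gamma'(i+1)$, and we can take $\Gamma(i+1) := 2\Gamma'(i+1)$. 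Running the inductive procedure $i$ times yields the desired chain in polynomial time. Making the correspondence between augmentation edges and shallow minors quantitatively tight --- which is what produces the ``constant factor'' gap promised in the footnote --- is the delicate part, and is essentially the technical core of Ne{\v{s}}et{\v{r}}il and Ossona de Mendez's original argument.
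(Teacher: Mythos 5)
The paper does not prove this statement: Theorem~\ref{thm:nevsetvril2008grad2} is imported as a black box from Ne{\v{s}}et{\v{r}}il and Ossona de Mendez, so there is no in-paper argument to compare against. Your reconstruction does follow the standard route of the original proof (greedy degeneracy orientation for the base case; forced transitive arcs plus a degeneracy-based orientation of the fraternal edges for the step; everything resting on a density bound for the augmented graph). However, two of your steps have genuine problems.

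First, the claim that the fraternal witnesses can be listed in $O(n\cdot\Gamma(i)^2)$ time ``by iterating over pairs of out-neighbors'' is unjustified: the inductive hypothesis bounds \emph{in}-degrees only, so a single vertex may have $\Theta(n)$ out-neighbors and contribute $\Theta(n^2)$ fraternal pairs. Worse, under the conventions as literally written in the paper (fraternality triggered by a common tail $u$ of the two arcs, while $\delta^-$ bounds in-degree), a star oriented away from its center has in-degree $1$ everywhere yet its $1$-transitive fraternal augmentation contains a clique on the leaves, so even trees would fail to have bounded expansion. The convention that makes the whole scheme work --- and the one Ne{\v{s}}et{\v{r}}il and Ossona de Mendez actually use --- triggers fraternality on two arcs with a common \emph{head}, so that both the transitive and the fraternal witnesses at a vertex $w$ live among its at most $\Gamma(i)$ in-neighbors and their in-neighbors, which is what yields the $O(n\cdot\Gamma(i)^2)$ enumeration and the per-vertex bound $\binom{\Gamma(i)}{2}$ on new fraternal demands. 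You have inherited the paper's head/tail slip, and your enumeration step fails as written.

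Second, the density bound on $H$ is asserted rather than proved, and the assertion is not a routine one: the fact that every edge of $H$ is realized by a bounded-length walk in $G$ does \emph{not} make $H$ a shallow (topological) minor of $G$, because the realizing walks for different edges can overlap arbitrarily, and a dense graph can have all its edges realized by short walks in a sparse host. The actual argument bounds $\nabla_0(G_{i+1})$ in terms of $\nabla_1(G_i)$ and $\Gamma(i)$ via a careful choice of branch sets (or a direct counting argument) and then iterates; this is precisely the step you defer to ``the technical core,'' and it is the entire content of the theorem. Moreover, invoking ``the bounded-expansion hypothesis at grad level $d(i)$'' silently switches to the grad definition of bounded expansion, whereas the paper defines bounded expansion by the mere existence of some TF-augmentation with in-degrees bounded by $\Gamma'$; you would need either the (nontrivial) equivalence of the two definitions or a direct argument from the existential augmentation, which is delicate because its orientations need not agree with the ones your algorithm commits to.
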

Below, we fix the transitive fraternal augmentation computed by Theorem~\ref{thm:nevsetvril2008grad2}

\subsubsection{Kazana--Segoufin's Normal Form.}

Here, we introduce Kazana--Segoufin's normal form, which is proposed for the counting and enumeration problems for first-order formulas.

Let us consider the $i$-th graph $\vec{G}_i$ in the transitive fraternal augmentation.
We can represent the graph structure by $\Gamma(i)$ functions $\rho_1, \ldots, \rho_{\Gamma(i)} \colon V(G) \to V(G)$ such that $\rho_p(u)$ represents the $p$-th adjacent vertex of $u$.
For simplicity, we define $\rho_0(u) = u$ for all $u \in V(\vec{G}_i)$.
Now, we take a $1$-transitive fraternal augmentation $\vec{G}_{i+1}$ of $\vec{G}_i$. 
Since $\vec{G}_{i+1}$ is a supergraph of $\vec{G}_i$, and the in-degrees of $\vec{G}_{i+1}$ are bounded by $\Gamma(i+1)$, we also represent the graph structure by $\Gamma(i+1)$ functions $\rho_1, \ldots, \rho_{\Gamma(i+1)}$, where $\rho_1, \ldots, \rho_{\Gamma(i)}$ represent the graph structure of $\vec{G}_i$.
Now, we store the information of newly added edges as follows.
Suppose that the edge $(u, w) \in E(\vec{G}_{i+1})$ is added because of the transitivity of $(u, v) \in E(\vec{G}_i)$ and $(v, w) \in E(\vec{G}_i)$.
Let $\rho_p(w) = v$, $\rho_q(v) = u$, and $u = \rho_r(w)$.
Then we add the label ``$\rho_r = \rho_q \circ \rho_p$'' to vertex $w$. 
Similarly, suppose that edge $(u, w) \in E(\vec{G}_{i+1})$ is added because of the fraternality of $(u, v) \in E(\vec{G}_i)$ and $(w, v) \in E(\vec{G}_i)$.
Let $\rho_p(w) = v$, $\rho_q(u) = v$, and $u = \rho_r(w)$.
Then we add the label ``$\rho_r = \rho_q \circ \rho_p^{-1}$'' to vertex $w$. 
The number of labels required to represent all the relations is $\Gamma(i+1) \Gamma(i)^2$.

By Gaifman's locality theorem (Theorem~\ref{lem:gaifmanformula}), any first-order formula $\phi(\bar{x})$ is $r$-local for some constant $r \in \mathbb{N}$.
Therefore, by taking an $r$-transitive fraternal augmentation, we can eliminate all the quantifiers from the formula.
Kazana and Segoufin~\cite{kazana2013enumeration} showed that the quantifier eliminated formula is in the following form.
\begin{theorem}[Kazana and Segoufin~\cite{kazana2013enumeration}]
\label{thm:kazana2013enumeration}
Any first-order formula $\phi(x_1, \ldots, x_k)$ on a class of graphs having bounded expansion is equivalent to the quantifier free formula
\begin{align}
\label{eq:kazana}
  \phi(x_1, \ldots, x_k) = \bigvee_{\nu} \phi_\nu(x_1, \ldots, x_{k-1}) \land \tau_\nu(x_k) \land \Delta^=_\nu(x_1, \ldots, x_k) \land \Delta^{\neq}_\nu(x_1, \ldots, x_k),
\end{align}
where $\nu$ runs over $O(1)$ disjunctions, $\tau_\nu(x_k)$ is a formula containing constantly many labels on $x_k$ and terms of the form $f(g(x_k)) = h(x_k)$, $\Delta^=_\nu(x_1, \ldots, x_k)$ consists of at most one term of the form $\rho_p(x_j) = \rho_q(x_k)$, and $\Delta^{\neq}_\nu(x_1, \ldots, x_k)$ consists of constantly many terms of the form $\rho_p(x_j) \neq \rho_q(x_k)$.
Such a formula is obtained in $O(1)$ time.
\blackqed
\end{theorem}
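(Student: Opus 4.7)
The plan is to work inside a sufficiently deep transitive fraternal augmentation $\vec{G}_r$ and eliminate quantifiers by exploiting bounded in-degree. Concretely, let $r$ be the locality radius guaranteed by Gaifman's locality theorem (Lemma~\ref{lem:gaifmanformula}) for $\phi(x_1, \ldots, x_k)$. I would begin by rewriting $\phi$ as a Boolean combination of $r$-local formulas and distance constraints. Using Theorem~\ref{thm:nevsetvril2008grad2}, compute the augmented graph $\vec{G}_r$ with in-degree bounded by $\Gamma(r)$; by construction, every edge between vertices at graph distance at most $r$ is materialized in $\vec{G}_r$, and transitivity/fraternality derivations are recorded by unary labels of the form $\rho_s = \rho_q \circ \rho_p$ or $\rho_s = \rho_q \circ \rho_p^{-1}$. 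Thus every $r$-local atomic fact between variables can be expressed as an equality or inequality between $\rho$-compositions applied to those variables, together with a unary label on one endpoint.

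Next, I would eliminate existential quantifiers one by one. For $\exists y\,\psi(y, x_1, \ldots, x_k)$, locality forces the witness $y$ to lie within graph distance at most $r$ of some $x_j$, hence to be adjacent to $x_j$ in $\vec{G}_r$. Bounded in-degree means either $y = \rho_p(x_j)$ for some index $p \le \Gamma(r)$ and some $j$, or $x_j = \rho_p(y)$ for some $p$ and the latter case is witnessed by a label at $x_j$ recording which $\rho_p(x_j) = y$ held in an earlier augmentation step; in either situation $y$ is a $\rho$-term in the free variables. The quantifier therefore becomes a finite disjunction over $(j, p)$ of substitutions $\psi[y := \rho_p(x_j)]$. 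Iterating leaves a quantifier-free Boolean combination of atoms that are either (i)~unary labels of one variable, (ii)~functional equalities $f(g(x_j)) = h(x_j)$ at one variable, or (iii)~binary (in)equalities of the shape $\rho_p(x_j) \bowtie \rho_q(x_{j'})$ with $\bowtie \in \{=, \ne\}$.

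Finally, I would push the formula into disjunctive normal form and, within each conjunctive clause $\nu$, partition the atoms by which free variables they touch. Atoms mentioning only $x_1, \ldots, x_{k-1}$ are absorbed into $\phi_\nu(x_1, \ldots, x_{k-1})$; atoms mentioning only $x_k$ are collected into $\tau_\nu(x_k)$; the remaining atoms couple $x_k$ to some $x_j$ with $j < k$ and have the form $\rho_p(x_j) \bowtie \rho_q(x_k)$. Inequalities go into $\Delta^{\ne}_\nu$. For the equalities, if two equalities $\rho_p(x_j) = \rho_q(x_k)$ and $\rho_{p'}(x_{j'}) = \rho_{q'}(x_k)$ coexist in the clause, I would use their composition $\rho_p(x_j) = \rho_{p'}(x_{j'})$ (an atom on the earlier variables, absorbable into $\phi_\nu$) to discard one of them, leaving $\Delta^=_\nu$ with at most one coupling equality.

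The main obstacle I expect is the bookkeeping in the last step: reducing multiple equalities to a single one requires that the composition $\rho_p(x_j) = \rho_{p'}(x_{j'})$ itself be expressible as an atom of the allowed form on the earlier variables, which in turn relies on the fact that the augmentation is deep enough that compositions of $\rho$-functions are again named by a single $\rho$-index together with a recorded label. A secondary care point is that substitution of $y := \rho_p(x_j)$ into unary labels and functional atoms may create nested $\rho$-compositions; these must be resolved using the stored composition labels so that only depth-one $\rho$-terms appear in the final normal form. Both difficulties are resolved by choosing $r$ slightly larger than the Gaifman radius so that one extra augmentation layer is available to encode all compositions that arise during the elimination.
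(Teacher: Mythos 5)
This statement is imported verbatim from Kazana and Segoufin \cite{kazana2013enumeration}; the paper gives no proof of it (the \blackqed marks it as an external result), so there is no in-paper argument to compare your sketch against. Judged on its own, your sketch reproduces the right high-level strategy of the cited work --- Gaifman locality, then a functional representation of the augmented graph via the bounded-in-degree maps $\rho_p$, then quantifier elimination and a DNF rearrangement --- but it has a genuine gap exactly where the real proof is hard.

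The gap is in the elimination of $\exists y$. You split into the case $y = \rho_p(x_j)$ (where $y$ is an in-neighbor of $x_j$ and substitution works) and the case $x_j = \rho_p(y)$ (where $x_j$ is an in-neighbor of $y$), and you dismiss the second case by saying it ``is witnessed by a label at $x_j$.'' It is not: in that case $y$ is \emph{not} a term in $x_j$, and since only the in-degree of the augmented graphs is bounded, a vertex $x_j$ can have $\Omega(n)$ out-neighbors, i.e.\ unboundedly many candidate witnesses $y$ with $x_j = \rho_p(y)$. No finite disjunction over $(j,p)$ of substitutions $\psi[y := \rho_p(x_j)]$ covers this case, and no single extra augmentation layer fixes it. Handling these ``many-witness'' existentials is the technical core of the Kazana--Segoufin (and underlying Dvo\v{r}\'ak--Kr\'al--Thomas) argument, which uses counting/threshold reasoning and the introduction of new unary predicates followed by re-augmentation. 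A secondary inaccuracy: your premise that ``every edge between vertices at graph distance at most $r$ is materialized in $\vec{G}_r$'' does not follow from the transitivity and fraternality rules as defined --- for instance, the pattern $u \to v \leftarrow w$ (a common head) triggers neither rule, so $u$ and $w$ at distance $2$ need not become adjacent after one augmentation. The correct statement about what augmentations capture is more delicate, and your locality-to-adjacency step silently relies on the stronger, false version.
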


\subsection{Proof of Theorem~\ref{thm:foexp}}

The proof of Theorem~\ref{thm:foexp} follows a similar strategy to the proof of Theorem~\ref{thm:fodeg}.
First, we solve the problem.
Then, we guess some information about the optimal solution.
By constructing a recursion tree of bounded size, we find at least one good solution in the tree.

First of all, we convert the given formula into a tractable form.
By expanding Kazana and Segoufin's normal form~\eqref{eq:kazana}, we obtain the following result.
\begin{lemma} 
\label{lem:normalform}
Any first-order formula $\phi(x_1, \ldots, x_k)$ on a class of graphs having bounded expansion is equivalent to
\begin{align}
\label{eq:normalform}
\phi(x_1, \ldots, x_k) = \bigvee_{\bar \nu} \tau_{\bar \nu}(x_1, \ldots, x_k) \land \Delta^=_{\bar \nu}(x_1, \ldots, x_k) \land \Delta^{\neq}_{\bar \nu}(x_1, \ldots, x_k)
\end{align}
  where $\tau_{\bar \nu}(x_1, \ldots, x_k) = \tau_{\nu_1}(x_1) \land \cdots \land \tau_{\nu_k}(x_k)$ is a formula that depends on the labels of $x_1, \ldots, x_k$ independently, $\Delta^=_{\bar \nu}(x_1, \ldots, x_k)$ is a conjunction of terms of the form $\rho_p(x_i) = \rho_q(x_j)$ such that the \emph{equality graph $\mathcal{X} _{\bar{\nu}} = (V(\mathcal{X}_{\bar{\nu}}), E(\mathcal{X}_{\bar{\nu}}))$ of $\Delta^{=}_{\bar{\nu}}(x_1, \ldots, x_k)$}, which is defined by $V(\mathcal{X}_{\bar{\nu}}) = \{ x_1, \ldots, x_k \}$ and $E(\mathcal{X}_{\bar{\nu}}) = \{ (x_i, x_j) : \text{``$\rho_p(x_i) = \rho_q(x_j)$''} \in \Delta_{\bar{\nu}}^{=}(x_1, \ldots, x_k) \}$, forms a forest, and  $\Delta^{\neq}_{\bar \nu}(x_1, \ldots, x_k)$ consists of constantly many terms of the form $\rho_p(x_i) \neq \rho_q(x_j)$.
Such a formula is obtained in $O(1)$ time.
\end{lemma}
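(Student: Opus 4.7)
The plan is to derive the normal form by iterating Theorem~\ref{thm:kazana2013enumeration} on the trailing variables of $\phi$. A single application of that theorem to $\phi(x_1,\ldots,x_k)$ peels off $x_k$: it rewrites $\phi$ as an $O(1)$ disjunction of conjunctions of a subformula $\phi_\nu(x_1,\ldots,x_{k-1})$, a pure-$x_k$ label predicate $\tau_\nu(x_k)$, at most one equality $\rho_p(x_j)=\rho_q(x_k)$ with $j<k$, and constantly many inequalities $\rho_p(x_j)\neq \rho_q(x_k)$ with $j<k$. I then recursively apply the same theorem to each $\phi_\nu(x_1,\ldots,x_{k-1})$ to peel off $x_{k-1}$, and continue down to the base case of a single variable, then distribute the resulting nested disjunctions into one outer disjunction indexed by a tuple $\bar\nu=(\nu_1,\ldots,\nu_k)$.

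After these $k-1$ iterations, every conjunctive branch of the resulting formula contains exactly one label predicate $\tau_{\nu_i}(x_i)$ for each $i=1,\ldots,k$, so the label part factorizes as $\tau_{\bar\nu}(x_1,\ldots,x_k)=\tau_{\nu_1}(x_1)\land\cdots\land\tau_{\nu_k}(x_k)$ as required. The inequalities contributed at iteration $i$ have the form $\rho_p(x_j)\neq\rho_q(x_i)$ with $j<i$, and accumulate into a single conjunction $\Delta^{\neq}_{\bar\nu}$ of constantly many terms (bounded per iteration, $k=O(1)$ iterations). Similarly, the equalities contributed have the form $\rho_p(x_j)=\rho_q(x_i)$ with $j<i$, with at most one per iteration, so $\Delta^{=}_{\bar\nu}$ consists of at most $k-1$ such terms.

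The key structural point to verify is that the equality graph $\mathcal{X}_{\bar\nu}$ is a forest. Orient each collected equality $\rho_p(x_j)=\rho_q(x_i)$ (with $j<i$) from $x_i$ to $x_j$; since iteration $i$ contributes at most one such equality involving $x_i$ as the ``new'' variable, every vertex $x_i$ has out-degree at most one, and all edges respect the linear order $x_1<x_2<\cdots<x_k$. A directed graph on a linearly ordered vertex set in which every edge goes from a larger to a smaller index is acyclic, and since each vertex has out-degree at most one, the underlying undirected graph is a forest. For the running time, the recursion has $k=O(1)$ levels, each invocation of Theorem~\ref{thm:kazana2013enumeration} costs $O(1)$, and distributing yields $O(1)$ disjuncts in total, so the overall construction runs in $O(1)$ time. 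The main conceptual step, rather than a computational obstacle, is recognising that the ``at most one equality per peeled variable'' property of Kazana--Segoufin's normal form is exactly what forces $\mathcal{X}_{\bar\nu}$ to be a forest; everything else is bookkeeping under distributivity.
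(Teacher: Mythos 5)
Your proposal is correct and follows essentially the same route as the paper: recursively expand via Theorem~\ref{thm:kazana2013enumeration} to peel off one variable at a time, distribute the disjunctions, and observe that since each newly peeled variable contributes at most one equality to an earlier variable, the equality graph is a forest. Your write-up simply spells out the bookkeeping and the out-degree argument that the paper's one-line proof leaves implicit.
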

\begin{proof}
By expanding $\phi(x_1, \ldots, x_k)$ using Theorem~\ref{thm:kazana2013enumeration} recursively, we obtain \eqref{eq:normalform}. 
In each disjunction, each variable $x_j$ appears in at most one equality constraint $\rho_p(x_i) = \rho_q(x_j)$ for $i < j$. 
Therefore, it forms a forest.
\qed
\end{proof}

As in the proof in Theorem~\ref{thm:fodeg}, without loss of generality, we assume that the given formula is in the form
\begin{align}
	\phi(x_1, \ldots, x_k) = \tau(x_1, \ldots, x_k) \land \Delta^{=}(x_1, \ldots, x_k) \land \Delta^{\neq}(x_1, \ldots, x_k),
\end{align}
and the optimal solution $\bar{u}^*$ also satisfies this formula.
Let $\mathcal{Y}_1 = (V(\mathcal{Y}_1), E(\mathcal{Y}_1)), \ldots, \mathcal{Y}_K = (V(\mathcal{Y}_K), E(\mathcal{Y}_K))$ be the connected components (trees) of the equality graph $\mathcal{X}$.
Then, the solution is decomposed into $\bar{u}_1, \ldots, \bar{u}_K$, where $\bar{u}_J$ is the variables of $J$-th tree $\mathcal{Y}_J$.
We try to construct a prefix dominating solution.

\medskip

\subsubsection{Simpler Case: No Inequality Constraints.}

First, we consider the case where $\Delta^{\neq}(x_1,\ldots,x_k)$ is empty. 
Suppose that we have a partial solution $(\bar{u}_1, \ldots, \bar{u}_{J-1})$ and try to find a solution $\bar{u}_J$ to the $J$-th subproblem.
Here, an $O(\log k)$-approximate solution is obtained by the recursive greedy algorithm as follows.

\begin{lemma}
\label{lem:simpler}
For given $\bar{u}_1, \ldots, \bar{u}_{I-1}$, we find an $O(\log k)$-approximate solution $\bar{u}_I$ to $\max \{ f_{\bar{u}_1, \ldots, \bar{u}_{I-1}}(\bar{u}_I) : \bar{u}_1, \ldots, \bar{u}_I \text{ satisfies $\phi$ restricted on these variables} \} $ in $n^{O(\log k)}$ time.
\end{lemma}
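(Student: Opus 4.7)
The plan is to run a recursive greedy algorithm on the equality tree $\mathcal{Y}_I$ of variables, mirroring Algorithm~\ref{alg:recursivegreedyMSO} but with $\mathcal{Y}_I$ playing the role of the vtree. Since the simpler case assumes $\Delta^{\neq}$ is empty, the feasibility of $\bar{u}_I$ depends only on the per-variable labels in $\tau$ and the tree-shaped system of equalities on $\mathcal{Y}_I$; no constraint couples $\bar{u}_I$ to the already-fixed $\bar{u}_1, \ldots, \bar{u}_{I-1}$. Thus the subproblem is monotone submodular maximization of the marginal $f_{\bar{u}_1, \ldots, \bar{u}_{I-1}}$ subject to a tree-shaped equality constraint on $\bar{u}_I$.

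Concretely, I would pick a centroid vertex $x_i$ of $\mathcal{Y}_I$, which exists in any tree and ensures that every component of $\mathcal{Y}_I - x_i$ contains at most $k/2$ variables. For each candidate value $u \in V(G)$ with $\tau_{\nu_i}(u)$, every equality $\rho_p(x_i) = \rho_q(x_j)$ incident to $x_i$ collapses into a unary constraint $\rho_q(x_j) = \rho_p(u)$ on the neighbor $x_j$, which is absorbed into the $\tau$-part of the relevant subproblem. The resulting subproblems on the components $T_1, \ldots, T_d$ of $\mathcal{Y}_I - x_i$ are then independent and of the same tree-structured form. I would recurse on them sequentially: solve $T_1$ to obtain $\bar{u}_{T_1}$, then solve $T_2$ with respect to the marginal gain at $\bar{u}_{T_1}$, and so on; finally take the best candidate over all guesses $u$, together with $u$ itself.

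For the analysis, let $\alpha(k)$ and $T(k)$ denote the approximation factor and running time when $|V(\mathcal{Y}_I)| = k$. Extending the argument of Theorem~\ref{thm:mso} from two to $d$ sequential subproblems via Lemma~\ref{lem:prefixdeviation}: when the guess $u$ coincides with the value the optimum assigns to $x_i$, the optimum $\bar{u}_I^*$ restricts to a feasible assignment on each $T_j$, and summing the prefix-dominating inequalities and applying Lemma~\ref{lem:prefixdeviation} gives
\begin{align}
(\alpha(k/2)+1) \, f(\bar{u}_{T_1} \cup \cdots \cup \bar{u}_{T_d}) \ \geq \ f(\bar{u}_I^*),
\end{align}
so $\alpha(k) \leq \alpha(k/2) + 1 = O(\log k)$. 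For the running time, the recurrence $T(k) \leq n \sum_j T(k_j) + n^{O(1)}$ with $k_j \leq k/2$ and $\sum_j k_j \leq k-1$ gives $T(k) = n^{O(\log k)}$ by a straightforward induction (assuming $T(k) \leq n^{c \log k}$ and noting that a factor of $n \cdot k \leq n^2$ is absorbed by a decrease of $n^c$ in each term).

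The main obstacle is confirming that the subproblems produced by the split remain of the same tree-structured form, so the induction is well-founded. Removing a vertex from a tree always leaves a forest of trees, and fixing a single variable introduces only unary constraints rather than new equalities between distinct variables. Hence the equality graph of each subproblem is exactly a component of $\mathcal{Y}_I - x_i$, and its label constraints are the restriction of $\tau$ augmented by the new unary constraint at the endpoint---exactly the hypothesis of the lemma, which closes the recursion.
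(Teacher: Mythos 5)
Your proposal is correct and follows essentially the same route as the paper's proof: centroid decomposition of the equality tree, guessing the centroid's assignment, sequential recursive greedy on the resulting subtrees, and the recurrence $\alpha(k)\le\alpha(k/2)+1$ via Lemma~\ref{lem:prefixdeviation} together with the $T(k)\le n\sum_j T(k_j)+n^{O(1)}$ bound for the running time. Your observation that fixing the centroid collapses incident equalities into unary constraints, making the subtree subproblems independent and of the same form, is exactly the (implicit) consistency argument in the paper's Algorithm~\ref{alg:recursivebddexp}.
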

\begin{proof}
For notational simplicity, we omit subscript, and suppose that the variables of $\mathcal{Y}$ are $x_1, \ldots, x_k$.
First, we find a \emph{centroid} $x \in V(\mathcal{Y})$ of the tree.
Here, a centroid is a node such that all the subtrees $\mathcal{Y}^{(1)}, \ldots, \mathcal{Y}^{(d)}$ of $\mathcal{Y} \setminus \{x\}$ have at most $k/2$ vertices.
We guess the assignment $u \in V(G)$ of $x$, and modify the function by $f_u$.
We call the procedure recursively to the first subtree $\mathcal{Y}^{(1)}$ to obtain a solution $\bar{u}^{(1)}$ to $\mathcal{Y}^{(1)}$ that is consistent with the the assignment of $u$.
Then, we modify the function by $f_{u,\bar{u}^{(1)}}$ and call the procedure recursively to find a solution $\bar{u}^{(2)}$ of $\mathcal{Y}^{(2)}$, where the assignment $\bar{y}^{(2)} \leftarrow \bar{u}^{(2)}$ should be consistent with the assignments $y \leftarrow u$ and $\bar{y}^{(1)} \leftarrow \bar{u}^{(2)}$. 
By continuing this procedure, we obtain a sequence of solutions $u, \bar{u}^{(1)}, \ldots, \bar{u}^{(d)}$. 
The solution to the previous trees and this tree is given by $\bar{u} = (u, \bar{u}^{(1)}, \ldots, \bar{u}^{(d)})$.
The detailed implementation is shown in Algorithm~\ref{alg:recursivebddexp}.
The algorithm maintains a partial assignment as a list $A \subseteq \mathbb{N} \times V(G)$ such that $(i, u_i) \in A$ implies $u_i$ is assigned to $x_i$.

First, we analyze the running time. 
The algorithm constructs a recursion tree. 
Since the depth is $\log k$ and the branching factor is $n$, the size is $n^{O(\log k)}$.
In each recursion, the complexity is polynomial. 
Therefore, the running time is $n^{O(\log k)}$.

Next, we analyze the approximation factor.
Let $\bar{u}^*$ be the optimal solution, and $u^*$ be a component of $\bar{u}^*$ that is assigned to $x$.
The algorithm tries all the assignments; thus, we consider a step when $u^*$ is assigned to $x$.
Let $\bar u^{(1)}, \ldots, \bar u^{(d)}$ be the solutions obtained by calling the algorithm recursively.
Let $\alpha(k)$ be the approximation factor of the algorithm.
Then, we have
\begin{align}
  &\alpha(k/2) \left( f(u^*, \bar{u}^{(1)}, \ldots, \bar{u}^{(i-1)}, \bar{u}^{(i)}) - f(u^*, \bar{u}^{(1)}, \ldots, \bar{u}^{(i-1)}) \right) \notag \\
  &\ge \left( f(u^*, \bar{u}^{(1)}, \ldots, \bar{u}^{(i-1)}, \bar{u}^{(i)*}) - f(u^*, \bar{u}^{(1)}, \ldots, \bar{u}^{(i-1)}) \right).
\end{align}
This implies that $(u^*, \bar{u}^{(1)}, \ldots, \bar{u}^{(d)})$ forms an $\alpha(k/2)$-prefix dominating solution.
Therefore, by Lemma~\ref{lem:prefixgreedy}, it gives an $(\alpha(k/2) + 1)$-approximate solution.
Hence, the approximation factor of the algorithm satisfies $\alpha(k) \le \alpha(k/2) + 1 \le \cdots \le \log k$. 
Therefore, it gives a solution with an approximation factor of $O(\log k)$.
\qed
\end{proof}

\begin{algorithm}[tb]
\caption{Recursive greedy algorithm for bounded expansion graph (no inequality constraints).}
\label{alg:recursivebddexp}
\begin{algorithmic}
  \Procedure{RecursiveGreedyBddExp'}{$\mathcal{Y}$, $A$}
  \State{Find a centroid $x_i \in V(\mathcal{Y})$}
  \State{Let $\mathcal{Y}^{(1)}, \ldots, \mathcal{Y}^{(d)}$ be the trees of $\mathcal{Y} \setminus x_i$}
  \For{$u \in V(G)$}
  \If{the assignment $(i, u)$ is consistent with $A$}
  \State{$A(u) = A \cup \{ (i, u) \}$}
  \For{$j = 1, \ldots, d$}
  \State{$A(u) \leftarrow \textsc{RecursiveGreedyBddExp'}(\mathcal{Y}^{(j)}, A(u))$}
  \EndFor
  \EndIf
  \EndFor
  \State{\textbf{return} the best assignment among $A(u)$}
\EndProcedure
\end{algorithmic}
\end{algorithm}

By iterating this procedure, we obtain a solution $(\bar{u}_1, \ldots, \bar{u}_K)$.
The procedure is shown in Algorithm~\ref{alg:greedybddexp}.
Since there are no constraints between the variables in different trees, 
the optimal solution $\bar{u}^*$ is prefix feasible to $\bar{u}$; therefore, $\bar{u}$ is a $O(\log k)$-prefix dominating solution, which is a $O(\log k)$-approximate solution.

\begin{algorithm}[tb]
\caption{Greedy Algorithm for bounded expansion graphs (no inequality constraints).}
\label{alg:greedybddexp}
\begin{algorithmic}[1]
\Procedure{GreedyBddExp'}{\,}
\State{$A = \emptyset$}
\For{$I = 1, \ldots, K$}
\State{$A \leftarrow \textsc{RecursiveGreedyBddExp'}(\mathcal{Y}_I, A)$}
\EndFor
\State{\textbf{return} the solution $\bar{u}$ corresponds to assignment $A$}
\EndProcedure
\end{algorithmic}
\end{algorithm}

\subsubsection{General Case.}

Next, we extend the procedure when the formula contains inequalities.
Suppose that we obtain a solution $(\bar{u}_1, \ldots, \bar{u}_K)$ by calling the algorithm.
If $\bar{u}^*$ is prefix feasible to $\bar{u}$, it is a desired solution.
We try to extract some information when $\bar{u}^*$ is not prefix feasible.

By the normal form~\eqref{eq:normalform}, the interactions between the different trees only come from the inequalities.
More precisely, the solution $\bar{u}_J^*$ is infeasible to the $J$-th subproblem only if there is some $u_i \in \bar{u}_{I}$ ($I \leq J$) and $u_j^* \in \bar{u}_J^*$ such that $\rho_p(u_i) \neq \rho_q(u_j^*)$ fails for some $\text{``$\rho_p(x_i) \neq \rho_q(x_j)$''} \in \Delta^{\neq}(x_1, \ldots, x_k)$.
If we know $\bar{u}^*$, it is easy to avoid such a solution. 
Thus, we develop a suspect-and-recurse type algorithm.

In the proof of Theorem~\ref{thm:fodeg}, we directly guessed an entry of $\bar{u}^*$; however, in this case, it is impossible because the number of candidates $\{ v \in V(G) : \rho_p(u_i) = \rho_q(v) \}$ can be $\Omega(n)$.
To overcome this issue, we guess an entry of the \emph{forbidden pattern}.
A forbidden pattern is a set of tuples $F \subseteq \mathbb{N} \times \mathbb{N} \times V(G)$, and a solution $(u_1, \ldots, u_k)$ \emph{satisfies the forbidden pattern} $F$ if $\rho_p(u_i) \neq v$ for all $(i, p, v) \in F$.
Then, $\bar{u}^*$ is prefix feasible to $\bar{u}$ if and only if $\bar{u}$ satisfies the forbidden pattern
\begin{align}
	F^* = \{ (i, p, \rho_q(u_j^*)) \in \mathbb{N} \times \mathbb{N} \times V(G): \text{``$\rho_p(x_i) \neq \rho_q(x_j)$''} \in \Delta^{\neq}(x_1, \ldots, x_k) \}.
\end{align}
The size of this forbidden pattern is $|F^*| = |\Delta^{\neq}(x_1, \ldots, x_k)| = O(1)$.
If $\bar{u}^*$ is not prefix feasible to $\bar{u}$, we can guess one entry of the forbidden pattern since it must contain $(i, p, \rho_p(u_i))$ for some $i$ and $p$.
Therefore, we can implement the suspect-and-recurse type algorithm for $F$.

The detailed implementation is shown in Algorithm~\ref{alg:suspectbddexp1}.

\begin{lemma}
Algorithm~\ref{alg:suspectbddexp1} runs in $n^{O(\log k)}$ time with an approximation factor of $O(\log k)$.
\end{lemma}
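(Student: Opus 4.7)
The plan is to analyze Algorithm \ref{alg:suspectbddexp1} by separately bounding its running time and its approximation factor, exploiting the fact that the set $\Delta^{\neq}(x_1, \ldots, x_k)$ of inequalities from the normal form \eqref{eq:normalform} has only constantly many atoms, so the ``true'' forbidden pattern $F^*$ has constant size.

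For the running time, I would show that the recursion tree has depth $|F^*| = O(1)$ and branching factor $O(k)$. Each recursive call appends one triple $(i, p, \rho_p(u_i))$ to $F$, and the useful branching terminates once $F \supseteq F^*$; since $|F^*|$ is bounded by the number of atoms in $\Delta^{\neq}$, which is $O(1)$, the depth is constant. At each node the algorithm must guess a pair $(i,p)$ with $i \in \{1,\ldots,k\}$ and $p$ ranging over the constantly many $\rho_p$ symbols appearing in $\Delta^{\neq}$, giving branching $O(k)$. Thus the tree has $k^{O(1)}$ nodes. At every node the algorithm invokes a forbidden-pattern-respecting version of \textsc{GreedyBddExp'}, which calls \textsc{RecursiveGreedyBddExp'} on each of the $K \le k$ trees $\mathcal{Y}_1, \ldots, \mathcal{Y}_K$; by Lemma \ref{lem:simpler} each such call costs $n^{O(\log k)}$. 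The total running time is therefore $k^{O(1)} \cdot n^{O(\log k)} = n^{O(\log k)}$.

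For the approximation factor, I would exhibit a root-to-leaf path in the recursion tree along which the returned solution achieves an $O(\log k)$-approximation. At each node on the path I distinguish two cases. If the greedy solution $\bar{u}$ computed under the current $F$ already makes $\bar{u}^*$ prefix feasible to $\bar{u}$, then the per-tree guarantee of Lemma \ref{lem:simpler} together with Lemma \ref{lem:prefixgreedy} implies $\bar{u}$ is an $O(\log k)$-approximate solution, because each tree $\mathcal{Y}_J$ is solved to within $O(\log k)$ factor given the prefix $(\bar{u}_1, \ldots, \bar{u}_{J-1})$, so $\bar{u}$ is an $O(\log k)$-prefix dominating solution. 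Otherwise, infeasibility of $\bar{u}_J^*$ in the $J$-th subproblem forces some atom ``$\rho_p(x_i) \neq \rho_q(x_j)$'' in $\Delta^{\neq}$ to fail with $u_i \in \bar{u}_I$ ($I \le J$) and $u_j^* \in \bar{u}_J^*$, which means $(i, p, \rho_p(u_i)) = (i, p, \rho_q(u_j^*)) \in F^*$. Since the algorithm branches over all $(i, p)$, one child has the correct guess and thus strictly increases $|F \cap F^*|$. After at most $|F^*| = O(1)$ such steps the path reaches a node with $F \supseteq F^*$, at which point any greedy solution satisfying $F$ also satisfies $F^*$, so $\bar{u}^*$ becomes prefix feasible and the approximation argument above applies.

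The main obstacle will be verifying that the forbidden-pattern-respecting version of \textsc{RecursiveGreedyBddExp'} still enjoys the $O(\log k)$ per-tree bound of Lemma \ref{lem:simpler}. Each added constraint $(i,p,v) \in F$ merely removes the candidate vertices $\{u : \rho_p(u) = v\}$ from eligibility for variable $x_i$; it does not couple variables across different trees or change the centroid decomposition of $\mathcal{Y}_J$. Hence the recursion of Algorithm \ref{alg:recursivebddexp} operates on strictly smaller per-variable domains, and its analysis goes through verbatim. A secondary subtlety is that when the algorithm appends $(i, p, \rho_p(u_i))$ using the greedy value $u_i$, one must justify that this triple actually lies in $F^*$ in the correct branch; this is immediate from the characterization of infeasibility above, because $\rho_p(u_i) = \rho_q(u_j^*)$ is exactly the entry that $F^*$ contributes for that atom of $\Delta^{\neq}$.
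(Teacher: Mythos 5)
Your proof is correct and follows essentially the same suspect-and-recurse argument as the paper: a constant-depth recursion tree whose nodes each invoke the $n^{O(\log k)}$-time greedy procedure analyzed in Lemma~\ref{lem:simpler}, with the approximation factor obtained by exhibiting a branch along which every guessed entry lies in $F^*$, so that $\bar{u}^*$ eventually becomes prefix feasible and Lemma~\ref{lem:prefixgreedy} applies. The only cosmetic differences are that you bound the branching factor by $O(k)$ (guessing the pair $(i,p)$) where the algorithm actually branches only over the $|\Delta^{\neq}(x_1,\ldots,x_k)| = O(1)$ atoms --- which does not change the final bound --- and that you explicitly verify the forbidden-pattern entries are unary domain restrictions so the per-tree analysis survives, a detail the paper leaves implicit.
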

\begin{proof}
First, we analyze the running time. 
The algorithm constructs a recursion tree.
Since the depth and the branching factor are $|\Delta^{\neq}(x_1, \ldots, x_k)| = O(1)$, the size of the tree is $O(1)$.
In each recursion, the algorithm calls Algorithm~\ref{alg:greedybddexp1}, whose running time is $n^{O(\log k)}$ by the same analysis as Lemma~\ref{lem:simpler}.
Therefore, the running time of Algorithm~\ref{alg:suspectbddexp1} is $n^{O(\log k)}$.

Next, we analyze the approximation factor.
If the optimal solution is prefix feasible to the current solution $\bar{u}$, the approximation factor is obtained by the same analysis as Lemma~\ref{lem:simpler}.
Otherwise, by the construction, it suspects at least one entry in $F^*$. 
By enumerating all possibilities up to the size of $F^*$, we find at least one solution that makes $\bar{u}^*$ prefix feasible in the tree.
\qed
\end{proof}
This lemma proves Theorem~\ref{thm:foexp}.

\begin{algorithm}[tb]
\caption{Recursive greedy algorithm for bounded expansion graphs.}
\label{alg:recursivebddexp1}
\begin{algorithmic}
  \Procedure{RecursiveGreedyBddExp}{$\mathcal{Y}$, $A$, $F$}
  \State{Find a centroid $x_i \in V(\mathcal{Y})$}
  \State{Let $\mathcal{Y}^{(1)}, \ldots, \mathcal{Y}^{(d)}$ be the trees of $\mathcal{Y} \setminus x_i$}
  \For{$u \in V(G)$}
  \If{the assignment $(i, u)$ is consistent with $A$ and satisfies $F$}
  \State{$A(u) = A \cup \{ (i, u) \}$}
  \For{$j = 1, \ldots, d$}
  \State{$A(u) \leftarrow \textsc{RecursiveGreedyBddExp}(\mathcal{Y}^{(j)}, A(u), F)$}
  \EndFor
  \EndIf
  \EndFor
  \State{\textbf{return} the best assignment among $A(u)$}
\EndProcedure
\end{algorithmic}
\end{algorithm}

\begin{algorithm}[tb]
\caption{Greedy Algorithm for bounded expansion graphs.}
\label{alg:greedybddexp1}
\begin{algorithmic}[1]
\Procedure{GreedyBddExp}{$F$}
\State{$A = \emptyset$}
\For{$I = 1, \ldots, K$}
\State{$A \leftarrow \textsc{RecursiveGreedyBddExp}(\mathcal{Y}_I, A, F)$}
\EndFor
\State{\textbf{return} the solution $\bar{u}$ corresponds to assignment $A$}
\EndProcedure
\end{algorithmic}
\end{algorithm}

\begin{algorithm}[tb]
\caption{Suspect-and-recurse algorithm for bounded expansion graphs.}
\label{alg:suspectbddexp1}
\begin{algorithmic}[1]
\Procedure{SuspectRecurseBddExp}{$F$}
\State{$\bar{u} = (u_1, \ldots, u_k) \leftarrow \textsc{GreedyBddExp}(F)$}
\If{$|F| < |\Delta^{\neq}(x_1, \ldots, x_k)|$}
\For{$\text{``$\rho_p(x_i) \neq \rho_q(x_j)$''} \in \Delta^{\neq}(x_1, \ldots, x_k)$}
\State{$\bar{u}_{i,p,j,q} \leftarrow \textsc{SuspectRecurseBddExp}(F \cup \{ (i, p, \rho_q(u_i)) \}$}
\EndFor
\EndIf
\State{\textbf{return} the best solution among $\bar{u}$ and $\bar{u}_{i,p,j,q}$}
\EndProcedure
\end{algorithmic}
\end{algorithm}

\section*{Acknowledgment}

We thank Antoine Amarilli for confirming the structuredness of their construction of DNNF in \cite{amarilli2017circuit}. 
We thank Jakub Gajarsky for the bibliography of the first-order logic constrained linear maximization problem.


\bibliographystyle{siamplain}
\bibliography{main}

\end{document}